\begin{document}
\newcommand{\qed}{\hphantom{.}\hfill $\Box$\medbreak}
\newcommand{\proof}{\noindent{\bf Proof \ }}

\newtheorem{theorem}{Theorem}[section]
\newtheorem{lemma}[theorem]{Lemma}
\newtheorem{corollary}[theorem]{Corollary}
\newtheorem{remark}[theorem]{Remark}
\newtheorem{example}[theorem]{Example}
\newtheorem{definition}[theorem]{Definition}
\newtheorem{construction}[theorem]{Construction}
\newtheorem{fact}[theorem]{Fact}
\newtheorem{proposition}[theorem]{Proposition}
\newtheorem{conjecture}[theorem]{Conjecture}

	\title{Construction of optimal flag codes by MRD codes\footnote{the National Natural Science Foundation of China under Grant $12101440$, the Natural Science Foundation of Jiangsu Province under Grant BK$20210858$ (Liu), and the National Natural Science Foundation of China under Grant $12271390$ (Ji). (Corresponding author: Shuhui Yu.)}}
	
	\author{\small Shuangqing Liu, Shuhui Yu, Lijun\ Ji \smallskip \footnote{	S. Liu is with the Department of Mathematics, Suzhou University of Science and Technology, Suzhou 215009, P. R. China  (e-mail: shuangqingliu@usts.edu.cn); S. Yu and L. Ji is with the Department of Mathematics, Soochow University, Suzhou 215006, P. R. China  (e-mail: yushuhui\_suda@163.com, jilijun@suda.edu.cn)}}
	\date{}
	\maketitle
	\begin{abstract}	
		Flag codes have received a lot of attention due to its application in random network coding. In 2021, Alonso-Gonz\'{a}lez et al. constructed optimal $(n,\mathcal{A})$-Optimum distance flag codes(ODFC) for $\mathcal {A}\subseteq \{1,2,\ldots,k,n-k,\ldots,n-1\}$ with $k\in \mathcal A$ and $k\mid n$.  
In this paper, we introduce a new construction of $(n,\mathcal A)_q$-ODFCs by maximum rank-metric codes, and prove that there is an $(n,\mathcal{A})$-ODFC of size $\frac{q^n-q^{k+r}}{q^k-1}+1$ for any $\mathcal{A}\subseteq\{1,2,\ldots,k,n-k,\ldots,n-1\}$ with
$\mathcal A\cap \{k,n-k\}\neq\emptyset$,
where $r\equiv n\pmod k$ and $0\leq r<k$.
Furthermore, when $k>\frac{q^r-1}{q-1}$, this $(n,\mathcal A)_q$-ODFC is optimal. Specially, when $r=0$, Alonso-Gonz\'{a}lez et al.'s result is also obtained. We also gives a characterization of almost optimum distance flag codes, and construct a family of optimal almost optimum flag distance codes.
		
		\medskip\noindent \textbf{Keywords}: rank-metric codes, flag codes,  constant dimension codes,  network coding\smallskip
	\end{abstract}
	
	\section{Introduction}
	
	Random network coding, introduced in \cite{acly}, is a new method for attaining a maximum
	information flow by using a channel modelled as an acyclic-directed multigraph with
	possibly several senders and receivers. In this model, intermediate nodes are allowed to
	perform random linear combinations of the received vectors instead of simply routing
	them, as it happens when using classical channels of communication. It was also proved in \cite{acly} that the information rate
	of a network can be improved by the model. However, This process is specially vulnerable to error dissemination. To deal with this  problem,
	K\"{o}etter and Kschischang \cite{kk} introduced the concept of subspace codes as adequate error-correction codes in random network coding.
	
	Subspace codes are applied in a single use of the channel.
	When the times of use the subspace channel are more than once, we talk about multishot subspace
	codes. In this kind of codes, introduced in \cite{nu}, the subspace channel is used many times,
	in order to transmit sequences of subspaces. 
	As it was explained in \cite{nu}, multishot
	subspace codes appear as an interesting alternative to subspace codes (one-shot subspace
	codes) when the field size $q$ or the packet size $n$ cannot be increased. The class of flag
	codes appear as a particular case of multishot subspace codes. 
	
	Flag codes in network coding appeared for the first time in \cite{lnv}. Later on, some studies of flag codes attaining the best possible distance (optimum distance flag codes (ODFCs)) are undertaken. Alonso-Gonz\'{a}lez, Navarro-P\'{e}rez and Soler-Escriv\`{a} in \cite{ans} constructed optimum distance full flag codes with $n=2k$ by planar spreads. The authors in \cite{ans1} gave the constructions of ODFCs with $k| n$  by  perfect matchings in graphs.  Recently, Chen and Yao \cite{cy} presented  constructions of totally isotropic flag orbit codes with optimum distance on the symplectic space $\mathbb{F}_q^{2v}$. Kurz \cite{kurz} also presented some upper and lower bounds of flag codes, and constructed optimum distance full flag codes with $n=2k+1$. Recently, Alonso-Gonz\'{a}lez et al. \cite{ans2} introduced distance vectors to derive bounds for their maximum possible size once the
	minimum distance and dimensions are fixed. 
	
	This paper is devoted to constructing flag codes based on maximum rank-metric codes. It is proved that there is an $(n,\mathcal{A})$-ODFC of size
$\frac{q^n-q^{k+r}}{q^k-1}+1$ for any $\mathcal{A}\subseteq\{1,2,\ldots,k,n-k,\ldots,n-1\}$ with
$\mathcal A\cap \{k,n-k\}\neq\emptyset$,
where $r\equiv n\pmod k$ and $0\leq r<k$.
Furthermore, when $k>\frac{q^r-1}{q-1}$, this $(n,\mathcal A)_q$-ODFC is optimal. Specially, the case $r=0$ also gives Alonso-Gonz\'{a}lez et al.'s optimal ODFCs with $k\mid n$ in \cite{ans1}, while our constructions do not depend on the condition that $k \mid n$. We also gives a characterization of $(n,\{1,2,\ldots, k,n-k+1,\ldots,n-1\})$-almost optimum distance flag codes (AODFCs), and prove that there is an $(n,\{1,2,\ldots, k,n-k+1,\ldots,n-1\})$-AODFC of size $\frac{q^n-q^{k+r-1}}{q^{k-1}-1}+1$, where $r \equiv n \pmod{(k-1)}$ and $0\leq r<k-1$. Furthermore, when $k>\frac{q^r-1}{q-1}+1$, this $(n,\{1,2,\ldots, k,n-k+1,\ldots,n-1\})_q$-AODFC is optimal. 
	
	The rest of this paper is organized as follows.
	Section 2 gives a brief introduction of constant dimension codes (CDCs), flag codes and optimum distance flag codes. 	
	In Section 3, we give a construction for $(n,\mathcal A)_q$-ODFCs based on maximum rank-metric (MRD) codes (see Theorem \ref{conc-1}), where $\mathcal A\subseteq \{1,2,\ldots,n-1\}$.
We show that there is an $(n,\mathcal{A})$-ODFC of size
$\frac{q^n-q^{k+r}}{q^k-1}+1$ for $\mathcal{A}\subseteq\{1,2,\ldots,k,n-k,\ldots,n-1\}$ with
$\mathcal A\cap \{k,n-k\}\neq\emptyset$,
where $r\equiv n\pmod k$ and $0\leq r<k$.
Furthermore, when $k>\frac{q^r-1}{q-1}$, the ODFC is optimal (see Corollary \ref{rek}).  

In Section 4, we give a characterization of $(n,\{1,2,\ldots, k,n-k+1,\ldots,n-1\})$-AODFCs, and give a construction for $(n,\{1,2,\ldots, k,n-k+1,\ldots,n-1\})$-AODFCs based on maximum rank-metric (MRD) codes.
We show that there is an $(n,\{1,2,\ldots, k,n-k+1,\ldots,n-1\})$-AODFC of size $\frac{q^n-q^{k+r-1}}{q^{k-1}-1}+1$, where $r \equiv n \pmod{(k-1)}$ and $0\leq r<k-1$. Furthermore, when $k>\frac{q^r-1}{q-1}+1$, the AODFC is optimal (see Corollary \ref{rek2}).  
	
	\section{Preliminaries}
	
	Let $\mathbb F_q$ be the finite field of order $q$, and $\mathbb F_q^n$ the $n$-dimensional vector space over $\mathbb F_q$. Let ${\cal P}_q(n)$ denote the set of all subspaces of $\mathbb F_q^n$. Given a nonnegative integer $k\leq n$, the set of all $k$-dimensional subspaces of $\mathbb F_q^n$ is called the {\em Grassmannian} ${\cal G}_q(n,k)$.

	\subsection{Constant dimension codes}
	A subset $\cal C$ of ${\cal G}_q(n,k)$ is called an $(n,d,k)_q$ {\em constant dimension code} (CDC), if $\cal C$ satisfies the {\em subspace distance}
	\begin{eqnarray}\label{1.1}
		d_S(\mathcal U,\mathcal V)\triangleq{\rm dim}~\mathcal U+{\rm dim}~\mathcal V-2{\rm dim}~(\mathcal U\cap \mathcal V)\geq d
	\end{eqnarray}
	for any two subspaces $\mathcal U,\mathcal V\in \mathcal C$.
	Elements in $\cal C$ are called {\em codewords}.
	An $(n,d,k)_q$-CDC with $M$ codewords is written as an $(n,M,d,k)_q$-CDC.
	Given $n,d,k$ and $q$, denote by $A^C_q(n, d, k)$ the maximum number of codewords among all $(n,d,k)_q$-CDCs. 

	Clearly, the minimum distance of a constant dimension code $\mathcal C$ is upper-bounded by:
	\begin{eqnarray}
		d_S(\mathcal C)\leq\left \{
		\begin {aligned}
		&2k\qquad \quad {\rm if~} 2k\leq n\\
		&2(n-k)~~{\rm otherwise}.\\
		\end {aligned}
		\right.\nonumber
	\end{eqnarray}
	Specially, every $(n,2k,k)_q$-CDC is a partial $k$-spread in $\mathcal G(n,k)$.
	\begin{theorem}{\rm \cite{ns,df}}\label{cdc-upper}
		Let $n\equiv r\pmod k$, $0\leq r<k$. If $r\neq 0$ then
		\begin{center}
			$A^C_q(n, 2k, k)\leq \frac{q^n-q^r}{q^k-1}-\lfloor\frac{\sqrt{4q^k(q^k-q^r)+1}-(2q^k-2q^r+1)}{2}\rfloor-1$.
		\end{center}
		If $k>\begin{bmatrix}
			r \\
			1 \\
		\end{bmatrix}_q=\frac{q^r-1}{q-1}$, then $A^C_q(n, 2k, k)= \frac{q^n-q^{k+r}}{q^k-1}+1$.
	\end{theorem}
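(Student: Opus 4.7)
The plan is to view an optimal $(n,2k,k)_q$-CDC as a maximum partial $k$-spread of $\mathbb{F}_q^n$ and to bound its size through hole counting. Let $\mathcal{S}$ be such a partial $k$-spread with $N=|\mathcal{S}|$, and let $\Omega$ denote the set of $1$-dimensional subspaces of $\mathbb{F}_q^n$ not contained in any codeword of $\mathcal{S}$. Since distinct codewords intersect trivially, counting points yields
\[
N\cdot\frac{q^k-1}{q-1}+|\Omega|=\frac{q^n-1}{q-1},
\]
so upper-bounding $N$ is equivalent to lower-bounding $|\Omega|$.

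For the first assertion I would adapt the Drake--Freeman averaging argument over hyperplanes. For each hyperplane $H$ of $\mathbb{F}_q^n$, let $a_H$ be the number of codewords lying in $H$; every other codeword meets $H$ in a $(k-1)$-subspace, so we obtain the local identity
\[
a_H\cdot\frac{q^k-1}{q-1}+(N-a_H)\cdot\frac{q^{k-1}-1}{q-1}+|\Omega\cap H|=\frac{q^{n-1}-1}{q-1}.
\]
Summing over all hyperplanes and applying the standard incidence counts $\sum_H a_H=N\cdot\frac{q^{n-k}-1}{q-1}$ and $\sum_H|\Omega\cap H|=|\Omega|\cdot\frac{q^{n-1}-1}{q-1}$ produces one linear relation. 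A second summation of $\binom{a_H}{2}$, bounded above using that two distinct hyperplanes share few common codewords, combined with a Cauchy--Schwarz control of $\sum_H a_H^2$, yields a second-order inequality. Eliminating the quadratic averaged term gives a quadratic inequality in $N$, and its largest integer root is precisely the floor expression appearing in the first bound.

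For the exact value in the second assertion, the lower bound is realised by Beutelspacher's classical construction: decompose $\mathbb{F}_q^n=V_1\oplus V_2$ with $\dim V_1=k+r$, build a maximal partial $k$-spread on $V_1$ of the required size, and extend to $V_2$ through the cosets of $V_1$; the hypothesis $k>(q^r-1)/(q-1)$ is exactly what guarantees the existence of the auxiliary partial spread inside the $(k+r)$-dimensional block. For the matching upper bound I would localise the averaging to hyperplanes containing a fixed codeword, which sharpens the coarse quadratic estimate into the linear inequality $|\Omega|\geq (q^{k+r}-q^r)/(q-1)$; inserting this into the hole-count identity produces the claimed value. The main obstacle is precisely this sharper hole bound: ruling out small ``compatible'' hole configurations that would allow more codewords requires a careful structural case analysis, and it is the arithmetic condition $k>(q^r-1)/(q-1)$ that forbids such configurations and thus collapses the Drake--Freeman floor term to the clean closed form.
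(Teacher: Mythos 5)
This theorem is imported by the paper from \cite{ns,df} without proof, so there is no internal argument to compare against; what matters is whether your reconstruction would actually work, and it contains a concrete quantitative error. Your hole-count identity $N\cdot\frac{q^k-1}{q-1}+|\Omega|=\frac{q^n-1}{q-1}$ is correct, but the sharpened bound you propose, $|\Omega|\geq \frac{q^{k+r}-q^r}{q-1}$, is false. Substituting $N=\frac{q^n-q^{k+r}}{q^k-1}+1$ into your own identity shows that the optimal configuration has exactly $|\Omega|=\frac{q^{k+r}-q^k}{q-1}=q^k\cdot\frac{q^r-1}{q-1}$ holes, which is strictly smaller than $\frac{q^{k+r}-q^r}{q-1}=q^r\cdot\frac{q^k-1}{q-1}$ because $k>r$. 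Your claimed bound would therefore force $N\leq \frac{q^n-q^{k+r}}{q^k-1}+\frac{q^r-1}{q^k-1}<\frac{q^n-q^{k+r}}{q^k-1}+1$, contradicting the very construction you invoke for the lower bound. The correct target is $|\Omega|\geq q^k\cdot\frac{q^r-1}{q-1}$, and establishing it is the entire content of the N\u{a}stase--Sissokho theorem; it is not obtained by ``localising the averaging to hyperplanes containing a fixed codeword'' but by a substantially more involved induction on $n$ analyzing how codewords meet a hyperplane.

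You also misattribute the role of the hypothesis $k>\frac{q^r-1}{q-1}$. Beutelspacher's recursive construction yields a partial $k$-spread of size $\frac{q^n-q^{k+r}}{q^k-1}+1$ for \emph{all} $k$ and $r$ with no arithmetic condition; note that a $(k+r)$-dimensional block with $0<r<k$ can contain at most one member of a partial $k$-spread (any two $k$-subspaces of it meet in dimension at least $k-r>0$), so ``a maximal partial $k$-spread on $V_1$ of the required size'' is simply one subspace, and no existence issue arises there. The hypothesis is needed only for the matching upper bound, and it is not vacuous: for $q=2$, $k=3$, $r=2$ one has $k\leq\frac{q^r-1}{q-1}=3$ and indeed $A^C_2(8,6,3)=34$ exceeds $\frac{2^8-2^5}{2^3-1}+1=33$, so any argument that ignores where the hypothesis enters the upper bound cannot be correct. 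Finally, for the first inequality your second-moment plan over hyperplanes is only a heuristic outline: the step ``two distinct hyperplanes share few common codewords'' is not quantified, and Drake and Freeman actually derive the floor term from net-completion results rather than from a direct Cauchy--Schwarz argument. As it stands the proposal does not establish either assertion.
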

	
	A $k$-dimensional subspace $\mathcal U$ of $\mathbb F^n_q$ can be represented by a $k\times n$ generator matrix whose rows form a basis for $\mathcal U$.
	If $\mathcal U,\mathcal V\in {\cal G}_q(n,k)$, the subspace distance on ${\cal G}_q(n,k)$ is also given by
	\begin{eqnarray}\label{1.2}
		d_S(\mathcal U,\mathcal V)=2\cdot {\rm rank} \binom{\boldsymbol{U}}{\boldsymbol{V}}-2k, \nonumber
	\end{eqnarray}
	where $\boldsymbol{U},\boldsymbol{V} \in \mathbb F^{k\times n}_q$ are matrices such that $\mathcal U$ =rowspace$(\boldsymbol{U})$ and $\mathcal V$ =rowspace$(\boldsymbol{V})$, $\mathbb F^{k \times n}_q$ is the set of all $k \times n$ matrices over $\mathbb F_q$. Clearly, generator matrices are usually not unique.
	However there exists a unique matrix representation for each element of the Grassmannian, namely the {\em reduced row echelon form}({\em inverse reduced row echelon form}). A $k\times n$ matrix with rank $k$ is in reduced row echelon form (resp. inverse reduced row echelon form) if
	$(1)$ the leading coefficient of a row is always to the right (resp. left) of the leading coefficient of the previous row;
	$(2)$ all leading coefficients are ones;
	$(3)$ every leading coefficient is the only nonzero entry in its column.
	
	For a subspace $\mathcal U\in \mathcal G_q(n,k)$, $\mathcal U$ can be represented by a $k\times n$ matrix $\boldsymbol U$ in reduced row echelon form(inverse reduced row echelon form), whose rows form a basis for $\mathcal U$.
	The {\em identifying vector}({\em inverse identifying vector}) $\boldsymbol v(\mathcal U)$($\boldsymbol{\hat{v}(\mathcal U)}$) or $\boldsymbol v(\boldsymbol U)$($\boldsymbol{\hat{v}(\boldsymbol U)}$) is the binary vector of length $n$ and weight $k$ such that the $1's$ of $\boldsymbol v(\boldsymbol U)$($\boldsymbol{\hat{v}(\boldsymbol U)}$) are in the positions where $\boldsymbol U$ has its leading ones. 
	
	\begin{lemma}{\rm \cite{es09}}\label{lem:efc-2}
		Let $\mathcal U,\mathcal V \in \mathcal G_q(n,k)$, $\mathcal U=rowspace(\boldsymbol U)$ and $\mathcal V=rowspace(\boldsymbol V)$, where $\boldsymbol U, \boldsymbol V\in \mathbb F_q^{k\times n}$ are in reduced row echelon forms. Then 
		\begin{center}
			$d_S(\mathcal U, \mathcal V) \geq d_H(\boldsymbol{v(U)}, \boldsymbol{v(V)})$,
		\end{center}
		where $d_H(\boldsymbol{v(U)}, \boldsymbol{v(V)})$ is Hamming distance of $\boldsymbol{v(U)}$ and $\boldsymbol{v(V)}$. 
	\end{lemma}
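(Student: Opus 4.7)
The plan is to convert the inequality into a rank statement via the formula $d_S(\mathcal U,\mathcal V)=2\,\mathrm{rank}\binom{\boldsymbol U}{\boldsymbol V}-2k$ already recorded in the excerpt. Let $S_U$ and $S_V$ denote the pivot positions of $\boldsymbol U$ and $\boldsymbol V$, so that $|S_U|=|S_V|=k$ and hence $d_H(\boldsymbol{v(U)},\boldsymbol{v(V)})=2|S_V\setminus S_U|$. The lemma is then equivalent to the rank inequality $\mathrm{rank}\binom{\boldsymbol U}{\boldsymbol V}\geq k+|S_V\setminus S_U|$, which I would establish by exhibiting that many linearly independent rows inside $\binom{\boldsymbol U}{\boldsymbol V}$.

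The candidate row set consists of all $k$ rows of $\boldsymbol U$ together with the sub-collection $\{v_{(q)}:q\in S_V\setminus S_U\}$ of rows of $\boldsymbol V$ whose pivot lies outside $S_U$. Since $\boldsymbol U$ has rank $k$, the only real content is to show that no nontrivial combination $w=\sum_{q\in S_V\setminus S_U}b_q v_{(q)}$ can lie in $\mathcal U$. The strategy is to track the leading nonzero position of $w$: because $\boldsymbol V$ is in RREF, its column at each pivot $q\in S_V$ is the corresponding standard basis vector, which forces the leading position of any such nonzero $w$ to be the smallest $q\in S_V\setminus S_U$ with $b_q\neq 0$, hence to lie in $S_V\setminus S_U$. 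The same RREF observation applied to $\boldsymbol U$ shows that the leading position of every nonzero vector in $\mathcal U$ must lie in $S_U$. These two facts together force $w=0$ and hence every $b_q=0$, yielding the desired independence.

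Combining these steps gives $\mathrm{rank}\binom{\boldsymbol U}{\boldsymbol V}\geq k+|S_V\setminus S_U|$, and then $d_S(\mathcal U,\mathcal V)\geq 2|S_V\setminus S_U|=d_H(\boldsymbol{v(U)},\boldsymbol{v(V)})$. The main step requiring careful justification is the ``pivot identification'' fact used twice above: in RREF, the leading coordinate of a linear combination $\sum_i c_i u_i$ equals $c_{i_0}$ at position $p_{i_0}$, where $i_0$ is the smallest index with $c_{i_0}\neq 0$, simply because each pivot column is a standard basis vector. Once this elementary observation is spelled out, the rest of the argument is routine bookkeeping.
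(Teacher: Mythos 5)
Your argument is correct: the pivot-identification fact holds exactly as you state it (in reduced row echelon form every entry of a row strictly to the left of its pivot is zero and every pivot column is a standard basis vector, so the leading position of a nonzero combination of rows is the pivot of the first row entering with nonzero coefficient), and this does force the $k+|S_V\setminus S_U|$ chosen rows of $\binom{\boldsymbol U}{\boldsymbol V}$ to be independent, giving $d_S(\mathcal U,\mathcal V)\geq 2|S_V\setminus S_U|=d_H(\boldsymbol{v(U)},\boldsymbol{v(V)})$. The paper itself gives no proof (it cites Etzion--Silberstein), and your argument is essentially the standard one there, which is usually phrased in the equivalent form $\dim(\mathcal U\cap\mathcal V)\leq|S_U\cap S_V|$ via the same observation that every nonzero vector of a row space has its leading position among the pivots.
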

	
	\begin{lemma}{\rm \cite{lj}}\label{lem:iefc}
		Let $\mathcal U,\mathcal V \in \mathcal G_q(n,k)$, $\mathcal U=rowspace(\boldsymbol U)$ and $\mathcal V=rowspace(\boldsymbol V)$, where $\boldsymbol U, \boldsymbol V\in \mathbb F_q^{k\times n}$ are in inverse reduced row echelon forms. Then 
		\begin{center}
			$d_S(\mathcal U, \mathcal V) \geq d_H(\boldsymbol{\hat{v}(U)}, \boldsymbol{\hat{v}(V)})$.
		\end{center}
	\end{lemma}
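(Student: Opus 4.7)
The plan is to derive Lemma~\ref{lem:iefc} as a direct corollary of the already-proved Lemma~\ref{lem:efc-2} by exploiting the symmetry between the reduced row echelon form and the inverse reduced row echelon form under column reversal. Let $J\in\mathbb F_q^{n\times n}$ denote the permutation matrix that reverses coordinate order, and let $\sigma:\mathbb F_q^n\to\mathbb F_q^n$ be the induced linear involution $\boldsymbol x\mapsto \boldsymbol xJ$ on row vectors. Since $\sigma$ is an invertible linear map, it preserves dimensions of subspaces and of their intersections, hence $d_S(\sigma(\mathcal U),\sigma(\mathcal V))=d_S(\mathcal U,\mathcal V)$, with $\sigma(\mathcal U)=\mathrm{rowspace}(\boldsymbol UJ)$ and analogously for $\mathcal V$.

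The main bookkeeping step is to verify two straightforward claims directly from the definitions. First, $\boldsymbol U$ is in inverse reduced row echelon form if and only if $\boldsymbol UJ$ is in (standard) reduced row echelon form: reversing the columns turns what plays the role of the rightmost leading $1$ in each row into the leftmost leading $1$, the condition that each leading column contain a unique nonzero entry is clearly preserved, and the decreasing ordering of leading positions along the rows becomes the increasing ordering required for standard RREF. Second, the identifying vector of $\boldsymbol UJ$ is the coordinate-reversal of the inverse identifying vector of $\boldsymbol U$, i.e.\ $\boldsymbol v(\boldsymbol UJ)=\hat{\boldsymbol v}(\boldsymbol U)J$. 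Since Hamming distance is invariant under any permutation of coordinates, it follows immediately that $d_H(\boldsymbol v(\boldsymbol UJ),\boldsymbol v(\boldsymbol VJ))=d_H(\hat{\boldsymbol v}(\boldsymbol U),\hat{\boldsymbol v}(\boldsymbol V))$.

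Combining these observations and applying Lemma~\ref{lem:efc-2} to the reversed generator matrices $\boldsymbol UJ$ and $\boldsymbol VJ$ of $\sigma(\mathcal U),\sigma(\mathcal V)\in\mathcal G_q(n,k)$ gives
$$d_S(\mathcal U,\mathcal V)=d_S(\sigma(\mathcal U),\sigma(\mathcal V))\geq d_H(\boldsymbol v(\boldsymbol UJ),\boldsymbol v(\boldsymbol VJ))=d_H(\hat{\boldsymbol v}(\boldsymbol U),\hat{\boldsymbol v}(\boldsymbol V)),$$
which is precisely the desired inequality. I expect no real obstacle; the only place where care is needed is in matching up the two parts of the definition of inverse reduced row echelon form with the effect of right-multiplication by $J$, which is purely a matter of carefully tracking "leftmost" versus "rightmost" leading entries in each row.
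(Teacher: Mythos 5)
Your argument is correct and complete. Note that the paper itself offers no proof of this lemma: it is imported verbatim from the reference \cite{lj} (as is its companion, Lemma~\ref{lem:efc-2}, from \cite{es09}), so there is no internal proof to compare against. Your reduction by the coordinate-reversal involution $J$ is the natural way to obtain the inverse statement from the standard one, and each step checks out: $d_S$ is invariant under any invertible linear map since dimensions of subspaces and intersections are preserved; if the pivots of $\boldsymbol U$ (reading leading entries from the right, as the inverse form requires) sit in columns $p_1>p_2>\cdots>p_k$, then $\boldsymbol U J$ has leftmost leading entries in columns $n+1-p_1<\cdots<n+1-p_k$ and inherits conditions (2) and (3), so it is in standard reduced row echelon form with $\boldsymbol v(\boldsymbol U J)=\hat{\boldsymbol v}(\boldsymbol U)J$; and $d_H$ is permutation-invariant. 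The only point demanding care is the one you flag yourself --- the paper's definition of the inverse form does not spell out that ``leading coefficient'' there means the rightmost nonzero entry of a row --- but under that intended reading your bookkeeping is exactly right, and the chain $d_S(\mathcal U,\mathcal V)=d_S(\sigma(\mathcal U),\sigma(\mathcal V))\geq d_H(\boldsymbol v(\boldsymbol U J),\boldsymbol v(\boldsymbol V J))=d_H(\hat{\boldsymbol v}(\boldsymbol U),\hat{\boldsymbol v}(\boldsymbol V))$ gives the claim.
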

	
	Rank-metric codes, especially MRD codes, play an important role in the constructions of CDCs. These constructions of CDCs can be seen in \cite{es09,xf,lj,lj1,lcf1} and http://subspace
	codes.uni-bayreuth.de.
	
	For a matrix $\boldsymbol{A} \in \mathbb F^{m \times n}_q$, the rank of $\boldsymbol{A}$ is denoted by rank$(\boldsymbol{A})$. The {\em rank distance} on $\mathbb F^{m \times n}_q$ is defined by
	\begin{center}
		$d_R(\boldsymbol{A}, \boldsymbol{B})={\rm rank}(\boldsymbol{A}-\boldsymbol{B}),~{\rm for}~\boldsymbol{A}, \boldsymbol{B} \in \mathbb F^{m \times n}_q$.
	\end{center}
	An $[m \times n, k, \delta]_q$ {\em rank-metric code} $\mathcal C$ is a $k$-dimensional $\mathbb F_q$-linear subspace of $\mathbb F^{m \times n}_q$ with minimum rank distance $\delta$.
	The Singleton-like upper bound for rank-metric codes implies that $$k \leq {\rm max}\{m,n\}({\rm min}\{m,n\}- \delta +1)$$
	holds for any $[m \times n, k, \delta]_q$ code. When the equality holds, $\cal C$ is called a {\em linear maximum rank-metric code}, denoted by an $[m \times n, \delta]_q$-MRD code. Linear MRD codes exist for all feasible parameters (cf. \cite{d,g,r}).
	
	\begin{theorem}{\rm\cite{d}}\label{MRD}
		Let $m,n,\delta$ be positive integers. Then there is an $[m \times n, \delta]_q$-MRD code, i.e, an $[m \times n, k, \delta]_q$ rank-metric code with $k={\rm max}\{m,n\}({\rm min}\{m,n\}- \delta +1)$. 
	\end{theorem}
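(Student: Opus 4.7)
The plan is to give Gabidulin's (equivalently Delsarte's) classical construction via linearized polynomials. First I would reduce to the case $m \geq n$: transposing matrices preserves rank, so an $[n \times m, \delta]_q$-MRD code yields an $[m \times n, \delta]_q$-MRD code with the same $\mathbb{F}_q$-dimension $m(n-\delta+1) = \max\{m,n\}(\min\{m,n\}-\delta+1)$. So assume $m \geq n$.

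Next, fix an $\mathbb{F}_q$-basis $\beta_1,\ldots,\beta_m$ of $\mathbb{F}_{q^m}$, and pick $n$ elements $\alpha_1,\ldots,\alpha_n \in \mathbb{F}_{q^m}$ that are linearly independent over $\mathbb{F}_q$ (for instance, $\alpha_i = \beta_i$). Define
\[
\mathcal{C} = \left\{ (f(\alpha_1),\ldots,f(\alpha_n)) : f(x) = \sum_{i=0}^{n-\delta} a_i x^{q^i},\ a_i \in \mathbb{F}_{q^m} \right\} \subseteq \mathbb{F}_{q^m}^n,
\]
and identify each $c \in \mathbb{F}_{q^m}$ with its coordinate column in $\mathbb{F}_q^m$ with respect to $\beta_1,\ldots,\beta_m$, so that $\mathcal{C}$ becomes a subset of $\mathbb{F}_q^{m \times n}$. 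Since evaluation is $\mathbb{F}_{q^m}$-linear in the coefficients $(a_0,\ldots,a_{n-\delta})$, the code $\mathcal{C}$ is $\mathbb{F}_q$-linear of dimension $m(n-\delta+1)$.

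For the distance, observe that the $\mathbb{F}_q$-rank of a codeword, viewed as an $m \times n$ matrix over $\mathbb{F}_q$, equals the $\mathbb{F}_q$-dimension of the span $\langle f(\alpha_1),\ldots,f(\alpha_n) \rangle_{\mathbb{F}_q} \subseteq \mathbb{F}_{q^m}$. Because $f$ is $\mathbb{F}_q$-linear on $\mathbb{F}_{q^m}$, this span equals $f(\langle \alpha_1,\ldots,\alpha_n\rangle_{\mathbb{F}_q})$, which by rank-nullity has $\mathbb{F}_q$-dimension $n - \dim_{\mathbb{F}_q}(\ker f \cap \langle \alpha_1,\ldots,\alpha_n\rangle_{\mathbb{F}_q})$. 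So it suffices to bound the kernel.

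The main obstacle (really the only nontrivial input) is the classical lemma on linearized polynomials: a nonzero $\mathbb{F}_q$-linearized polynomial of $q$-degree $\leq n-\delta$ has $\mathbb{F}_q$-kernel of dimension at most $n-\delta$ in any extension of $\mathbb{F}_q$. This follows because such an $f$, regarded as an ordinary polynomial, has degree at most $q^{n-\delta}$ and hence at most $q^{n-\delta}$ roots, while the kernel is an $\mathbb{F}_q$-vector space whose size is a power of $q$. Combined with the preceding paragraph, this forces the rank of any nonzero codeword to be at least $n-(n-\delta)=\delta$, so $\mathcal{C}$ meets the Singleton bound and is an $[m\times n,\delta]_q$-MRD code.
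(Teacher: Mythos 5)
Your proposal is correct. Note first that the paper does not prove this statement at all: Theorem \ref{MRD} is quoted from Delsarte's paper \cite{d}, whose original argument is framed in terms of bilinear forms and association schemes; what you have written out is instead the Gabidulin-style evaluation construction via linearized polynomials, which is the standard modern route and is perfectly adequate here. All the essential steps are present: the reduction to $m\geq n$ by transposition, the identification of the matrix rank of a codeword with $\dim_{\mathbb F_q} f(\langle \alpha_1,\ldots,\alpha_n\rangle_{\mathbb F_q})$, and the key kernel lemma that a nonzero linearized polynomial of $q$-degree at most $n-\delta$ has kernel of $\mathbb F_q$-dimension at most $n-\delta$. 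Two small points of presentation. First, your dimension count $m(n-\delta+1)$ needs the evaluation map to be \emph{injective} on the space of linearized polynomials of $q$-degree at most $n-\delta$; linearity alone does not give this. Injectivity does follow from your own distance argument (a nonzero $f$ yields a codeword of rank at least $\delta\geq 1$, hence a nonzero matrix), so nothing is missing, but it would be cleaner to state the distance bound first and deduce the dimension from it. Second, the theorem as stated in the paper omits the feasibility condition $\delta\leq\min\{m,n\}$, which your construction (and any construction) implicitly requires so that $n-\delta\geq 0$; it is worth making that hypothesis explicit.
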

	
	\begin{lemma}{\rm\cite{es09}}\label{lem:efc-1}
		Let $\mathcal U, \mathcal{V}\in \mathcal G_q(n,k)$, $\mathcal U=rowspace(\boldsymbol U)$ and $\mathcal V=rowspace(\boldsymbol V)$, where $\boldsymbol U, \boldsymbol V\in \mathbb F_q^{k\times n}$ are in reduced row echelon forms.
		If $\boldsymbol{v(U)} = \boldsymbol{v(V)}$, then
		\begin{center}
			$d_S(\mathcal U,\mathcal V)=2d_R(\boldsymbol C_{\boldsymbol U}, \boldsymbol C_{\boldsymbol V})$,
		\end{center}
		where $\boldsymbol C_{\boldsymbol U}$ and $\boldsymbol C_{\boldsymbol V}$ denote the submatrices of $\boldsymbol{U}$ and $\boldsymbol{V}$ without the columns of their pivots, respectively.
	\end{lemma}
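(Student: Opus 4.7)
The plan is to use the rank formulation $d_S(\mathcal U,\mathcal V) = 2\cdot {\rm rank}\binom{\boldsymbol U}{\boldsymbol V} - 2k$ already recorded in the preliminaries, and reduce the rank of the stacked $2k\times n$ matrix to the rank of the difference $\boldsymbol C_{\boldsymbol U} - \boldsymbol C_{\boldsymbol V}$ of the non-pivot submatrices. The key structural input is that, because $\boldsymbol v(\boldsymbol U) = \boldsymbol v(\boldsymbol V)$ and both matrices are in reduced row echelon form, the pivot columns of $\boldsymbol U$ and $\boldsymbol V$ occupy the same $k$ positions and in both matrices restrict to the same $k\times k$ identity block.

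First, I would apply the rank-preserving block row operation that replaces the top block $\boldsymbol U$ by $\boldsymbol U - \boldsymbol V$. Since the pivot columns of $\boldsymbol U$ and $\boldsymbol V$ coincide as identity blocks, the new top block $\boldsymbol U - \boldsymbol V$ vanishes on every pivot column; its nonzero entries are confined to the $n-k$ non-pivot columns and form exactly $\boldsymbol C_{\boldsymbol U} - \boldsymbol C_{\boldsymbol V}$. Hence ${\rm rank}(\boldsymbol U - \boldsymbol V) = {\rm rank}(\boldsymbol C_{\boldsymbol U} - \boldsymbol C_{\boldsymbol V}) = d_R(\boldsymbol C_{\boldsymbol U}, \boldsymbol C_{\boldsymbol V})$.

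Second, I would show that in the transformed stacked matrix the two row spans intersect trivially. Any vector in ${\rm rowspace}(\boldsymbol V)$ is uniquely determined by its values in the pivot columns (this is the defining property of the reduced row echelon form), so any such vector that is zero in all pivot positions must be zero. Since every row of $\boldsymbol U - \boldsymbol V$ is supported on non-pivot columns, it follows that ${\rm rank}\binom{\boldsymbol U - \boldsymbol V}{\boldsymbol V} = k + {\rm rank}(\boldsymbol C_{\boldsymbol U} - \boldsymbol C_{\boldsymbol V})$. Substituting into the rank-distance formula yields $d_S(\mathcal U,\mathcal V) = 2(k + d_R(\boldsymbol C_{\boldsymbol U}, \boldsymbol C_{\boldsymbol V})) - 2k = 2\,d_R(\boldsymbol C_{\boldsymbol U}, \boldsymbol C_{\boldsymbol V})$.

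I do not foresee any serious obstacle: the only nontrivial step is the direct-sum argument separating the pivot and non-pivot contributions, and this is an immediate consequence of the uniqueness of the reduced row echelon form.
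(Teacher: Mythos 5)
Your argument is correct and is essentially the standard proof of this lemma from the cited source \cite{es09}: one writes $d_S(\mathcal U,\mathcal V)=2\,\mathrm{rank}\binom{\boldsymbol U}{\boldsymbol V}-2k$, subtracts $\boldsymbol V$ from $\boldsymbol U$ to kill the common identity block in the pivot columns, and observes that the resulting rows (supported on non-pivot columns) meet $\mathrm{rowspace}(\boldsymbol V)$ trivially because a vector of $\mathrm{rowspace}(\boldsymbol V)$ vanishing at all pivot positions must be zero. The paper itself only quotes this lemma without proof, and your write-up supplies exactly the intended argument, with the one nontrivial step (the direct-sum separation of pivot and non-pivot contributions) justified correctly.
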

	
	\begin{lemma}{\rm\cite{lj}}\label{lem:iefc-1}
		Let $\mathcal U, \mathcal{V}\in \mathcal G_q(n,k)$, $\mathcal U=rowspace(\boldsymbol U)$ and $\mathcal V=rowspace(\boldsymbol V)$, where $\boldsymbol U, \boldsymbol V\in \mathbb F_q^{k\times n}$ are in inverse reduced row echelon forms.
		If $\boldsymbol{\hat{v}(U)} = \boldsymbol{\hat{v}(V)}$, then
		\begin{center}
			$d_S(\mathcal U,\mathcal V)=2d_R(\boldsymbol C_{\boldsymbol U}, \boldsymbol C_{\boldsymbol V})$,
		\end{center}
		where $\boldsymbol C_{\boldsymbol U}$ and $\boldsymbol C_{\boldsymbol V}$ denote the submatrices of $\boldsymbol{U}$ and $\boldsymbol{V}$ without the columns of their pivots, respectively.
	\end{lemma}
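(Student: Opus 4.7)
The plan is to mirror the argument of Lemma \ref{lem:efc-1} almost verbatim, since the only property of the reduced row echelon form used there is that each pivot column has a single $1$ (at the pivot row) and zeros elsewhere; the inverse reduced row echelon form shares this property, differing only in which side the leading ones sit on.

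First I would use the hypothesis $\boldsymbol{\hat v(U)}=\boldsymbol{\hat v(V)}$ to conclude that $\boldsymbol U$ and $\boldsymbol V$ have their pivots in exactly the same $k$ column positions. Because each pivot column in either matrix contains a single $1$ in the pivot row and zeros elsewhere, the matrix $\boldsymbol V-\boldsymbol U$ is identically zero on all $k$ pivot columns, and on the remaining $n-k$ columns it equals $\boldsymbol C_{\boldsymbol V}-\boldsymbol C_{\boldsymbol U}$.

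Next I would compute $d_S(\mathcal U,\mathcal V)$ via the rank formula
\[
d_S(\mathcal U,\mathcal V)=2\,\mathrm{rank}\binom{\boldsymbol U}{\boldsymbol V}-2k
=2\,\mathrm{rank}\binom{\boldsymbol U}{\boldsymbol V-\boldsymbol U}-2k.
\]
After permuting columns so that the $k$ pivot columns come first, the stacked matrix on the right has the block form
\[
\begin{pmatrix} I_k & \boldsymbol C_{\boldsymbol U}\\ 0 & \boldsymbol C_{\boldsymbol V}-\boldsymbol C_{\boldsymbol U}\end{pmatrix},
\]
whose rank is evidently $k+\mathrm{rank}(\boldsymbol C_{\boldsymbol V}-\boldsymbol C_{\boldsymbol U})=k+d_R(\boldsymbol C_{\boldsymbol U},\boldsymbol C_{\boldsymbol V})$. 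Substituting gives $d_S(\mathcal U,\mathcal V)=2d_R(\boldsymbol C_{\boldsymbol U},\boldsymbol C_{\boldsymbol V})$ as claimed.

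There is no real obstacle here: the only thing to be careful about is that the pivot-column-clearing property used to get the zero block in the bottom-left still holds for the inverse reduced row echelon form, which it does by definition. An alternative (essentially equivalent) route is to observe that reversing the order of coordinates is an $\mathbb F_q$-linear isomorphism of $\mathbb F_q^n$ that preserves subspace distance, sends inverse reduced row echelon representatives to reduced row echelon representatives with reversed identifying vectors, and maps $\boldsymbol C_{\boldsymbol U},\boldsymbol C_{\boldsymbol V}$ to the analogous non-pivot submatrices in the reversed coordinates (preserving rank distance); this reduces the statement directly to Lemma \ref{lem:efc-1}, but I would favour the direct block-matrix proof as cleaner.
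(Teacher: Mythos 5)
The paper states this lemma without proof, simply citing \cite{lj}, so there is nothing to compare against; your argument is correct and is the standard one, namely the same block-triangular rank computation that proves Lemma \ref{lem:efc-1}, transported verbatim because the inverse reduced row echelon form also clears every pivot column down to a single $1$. The only cosmetic point is that, after moving the pivot columns to the front, the top-left block is a fixed permutation matrix (the row-to-pivot assignment runs in the opposite order under the inverse convention) rather than literally $I_k$, which changes nothing in the rank count $k+\mathrm{rank}(\boldsymbol C_{\boldsymbol V}-\boldsymbol C_{\boldsymbol U})$.
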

	
	For more details on CDCs and rank-metric codes, see \cite{es09,BHLPRW2022,lj1} and the references therein.

	\subsection{Flag codes}
	A {\em flag} of type $(t_1,t_2,\ldots,t_r)$, with $0<t_1<t_2<\ldots<t_r<n$, on the vector space $\mathbb F_q^n$ is a sequence of subspaces $\mathcal F=(\mathcal F_1,\mathcal F_2,\ldots,\mathcal F_r)$ in
	${\cal G}_q(n,t_1)\times {\cal G}_q(n,t_2)\times\cdots\times{\cal G}_q(n,t_r)\subseteq {\cal P}_q(n)^r$ such that
	\begin{center}
		$\{0\}\subsetneq \mathcal F_1\subsetneq \cdots\subsetneq \mathcal F_r\subsetneq \mathbb F_q^n$.
	\end{center}
	With this notation, $\mathcal F_i$ is said to be the $i$-th subspace of $\mathcal F$. In case the type vector is $(1,2,\ldots,n-1)$ we say that $\mathcal F$ is a {\em full flag}.
	
	The space of flags of type $(t_1,t_2,\ldots,t_r)$ on $\mathbb F_q^n$ is denoted by $\mathcal F_q(n, (t_1,t_2,\ldots,t_r))$ and
	can be endowed with the flag distance $d_f$ that naturally extends the subspace distance defined in (\ref{1.1}): given two flags $\mathcal F=(\mathcal F_1,\mathcal F_2,\ldots,\mathcal F_r)$ and
	$\mathcal F'=(\mathcal F'_1,\mathcal F'_2,\ldots,\mathcal F'_r)$ in $\mathcal F_q(n, (t_1,t_2,\ldots,t_r))$, the flag distance between them is
	\begin{center}
		$d_f(\mathcal F,\mathcal F')=\sum\limits_{i=1}^{r}d_S(\mathcal F_i,\mathcal F'_i)$.
	\end{center}
	
	A flag code of type $(t_1,t_2,\ldots,t_r)$ on $\mathbb F_q^n$ is defined as any non-empty subset $\mathcal C \subseteq \mathcal F_q(n, (t_1,t_2,\ldots,t_r))$. The minimum distance of a flag code $\mathcal C$ of type
	$(t_1,t_2,\ldots,t_r)$ on $\mathbb F_q^n$ is given by
	\begin{center}
		$d_f(\mathcal C)=\min\{d_f(\mathcal F,\mathcal F'):\mathcal F,\mathcal F'\in \cal C,~\mathcal F\neq\mathcal F'\}$.
	\end{center}
	\begin{lemma}{\rm \cite{ans,ans1}}\label{upper}
		Given a flag code $\mathcal C$ of type $(t_1,t_2,\ldots,t_r)$, its minimum distance is upper-bounded by:
		\begin{eqnarray}\label{disf}
			d_f(\mathcal C)\leq
			2\left(\sum\limits_{t_i\leq \lfloor\frac{n}{2}\rfloor}t_i+\sum\limits_{t_i> \lfloor\frac{n}{2}\rfloor}(n-t_i)\right).
		\end{eqnarray}
	\end{lemma}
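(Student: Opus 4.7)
The plan is to reduce the flag-distance bound to a pairwise statement and then to the per-dimension subspace-distance bound that already appeared (implicitly) when the optimum distance of a CDC was discussed before Theorem~\ref{cdc-upper}. First I would fix any two distinct flags $\mathcal F=(\mathcal F_1,\ldots,\mathcal F_r)$ and $\mathcal F'=(\mathcal F'_1,\ldots,\mathcal F'_r)$ in $\mathcal C$, and use the definition $d_f(\mathcal F,\mathcal F')=\sum_{i=1}^{r}d_S(\mathcal F_i,\mathcal F'_i)$ to split the problem into independently bounding each of the $r$ summands.

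For the $i$-th summand, both $\mathcal F_i$ and $\mathcal F'_i$ lie in $\mathcal G_q(n,t_i)$. Starting from the definition in (\ref{1.1}), I would apply the trivial estimate $\dim(\mathcal F_i\cap\mathcal F'_i)\geq 0$ to obtain $d_S(\mathcal F_i,\mathcal F'_i)\leq 2t_i$, and also rewrite the distance as $2\dim(\mathcal F_i+\mathcal F'_i)-2t_i$ and use $\dim(\mathcal F_i+\mathcal F'_i)\leq n$ to obtain $d_S(\mathcal F_i,\mathcal F'_i)\leq 2(n-t_i)$. Taking the sharper of the two bounds yields $d_S(\mathcal F_i,\mathcal F'_i)\leq 2t_i$ whenever $t_i\leq \lfloor n/2\rfloor$ and $d_S(\mathcal F_i,\mathcal F'_i)\leq 2(n-t_i)$ otherwise, which is exactly the Grassmannian upper bound appearing in the CDC discussion.

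Summing these estimates over $i=1,\ldots,r$ gives the right-hand side of (\ref{disf}) as an upper bound on $d_f(\mathcal F,\mathcal F')$ for every pair of distinct flags in $\mathcal C$. Since $d_f(\mathcal C)$ is the minimum of $d_f(\mathcal F,\mathcal F')$ over such pairs, the same inequality descends to $d_f(\mathcal C)$, establishing (\ref{disf}). I do not expect any real obstacle here: the argument is just a termwise application of the classical Grassmannian distance bound, and the only mildly delicate point is the case split at the dimension $\lfloor n/2\rfloor$, which is handled uniformly by writing each per-dimension bound as $2\min(t_i,n-t_i)$.
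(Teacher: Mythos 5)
Your argument is correct: each summand $d_S(\mathcal F_i,\mathcal F'_i)$ is bounded by $2t_i$ (from $\dim(\mathcal F_i\cap\mathcal F'_i)\geq 0$) and by $2(n-t_i)$ (from $\dim(\mathcal F_i+\mathcal F'_i)\leq n$), hence by $2\min(t_i,n-t_i)$, and summing over $i$ and passing to the minimum over distinct pairs gives (\ref{disf}). The paper itself states this lemma without proof, citing \cite{ans,ans1}, and your termwise reduction to the classical Grassmannian distance bound is exactly the standard argument used there, so there is nothing to add.
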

	
	A flag code $\mathcal C$ of type $(t_1,t_2,\ldots,t_r)$ is called an $(n,(t_1,t_2,\ldots,t_r))_q$ {\em optimum distance flag code} (ODFC) if  the distance attains $d_{max}$ for flag codes of type $(t_1,t_2,\ldots,t_r)$ in Lemma \ref{upper}. A flag code $\mathcal C$ of type $(t_1,t_2,\ldots,t_r)$ is called an $(n,(t_1,t_2,\ldots,t_r))_q$ {\em almost optimum distance flag code} (AODFC) if  the minimum distance is $d_{max}-2$, where $d_{max}$ is the upper bound of type $(t_1,t_2,\ldots,t_r)$ in Lemma \ref{upper}.
 Let $A_q(n, \{t_1,t_2,\ldots,t_r\})$($A^{*}_q(n, \{t_1,t_2,\\ \ldots,t_r\})$) denote the maximum number of codewords among all $(n,(t_1,t_2,\ldots,t_r))_q$-ODFCs\\(AODFCs).
An $(n,(t_1,t_2,\ldots,t_r))_q$-ODFC(AODFC) with $A_q(n, \{t_1,t_2,\ldots,t_r\})$ $(A^{*}_q(n, \{t_1,\\t_2,\ldots,t_r\}))$ codewords is said to be {\em optimal}.

	Given a type vector $(t_1,t_2,\ldots,t_r)$, for every $i=1,2,\ldots,r$, we define the {\em $i$-projection} to be the map
	\begin{center}
		$p_i: \mathcal F_q(n, (t_1,t_2,\ldots,t_r))\longrightarrow {\cal G}_q(n,t_i)$
	\end{center}
	\begin{center}
		$\mathcal F=(\mathcal F_1,\mathcal F_2,\ldots,\mathcal F_r)\longmapsto p_i(\mathcal F)=\mathcal F_i$.
	\end{center}
	The {\em $i$-projected code} of $\mathcal C$ is the set $\mathcal C_i=\{p_i(\mathcal F):\mathcal F\in \mathcal C\}$. By definition, this code $\mathcal C_i$ is a constant dimension code in the ${\cal G}_q(n,t_i)$
	and its cardinality satisfies $|\mathcal C_i|\leq|\mathcal C|$. We say that $\mathcal C$ is a {\em disjoint flag code} if $|\mathcal C_1|=|\mathcal C_2|=\cdots=|\mathcal C_r|=|\mathcal C|$,  that is, the
	$i$-projection $p_i$ is injective for any $i\in\{1,2,\ldots,r\}$.
	
	\begin{theorem}{\rm \cite{ans}}\label{ODFC-Proj}
		Let $\mathcal C$ be a flag code of type $(t_1,t_2,\ldots,t_r)$ on $\mathbb F_q^n$. The following statements are equivalent:
		\begin{itemize}
			\item[$(1)$] The code $\mathcal C$ is an optimum distance flag code.
			\item[$(2)$] The code $\mathcal C$ is disjoint and every projected code $\mathcal C_i$ attains the maximum possible distance, that is, $d_S(\mathcal C_i)=\min\{2t_i,2(n-t_i)\}$.
		\end{itemize}
	\end{theorem}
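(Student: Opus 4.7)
The plan is to prove the two implications directly, leveraging Lemma \ref{upper} together with the \emph{slotwise} bound $d_S(\mathcal{U}, \mathcal{V}) \leq \min\{2t_i, 2(n-t_i)\}$ that holds for any pair $\mathcal{U}, \mathcal{V} \in \mathcal{G}_q(n, t_i)$ (the same CDC distance bound recorded just before Theorem \ref{cdc-upper}, applied to a single pair rather than to a whole code). Since $d_{max} = \sum_{i=1}^{r} \min\{2t_i, 2(n-t_i)\}$ by Lemma \ref{upper}, the ODFC condition can be read as the requirement that every pair of distinct flags in $\mathcal{C}$ simultaneously saturate all $r$ slotwise bounds.

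For $(2)\Rightarrow(1)$, take any two distinct flags $\mathcal{F}, \mathcal{F}' \in \mathcal{C}$. Disjointness gives $p_i(\mathcal{F}) \neq p_i(\mathcal{F}')$ for every $i$, so $d_S(\mathcal{F}_i, \mathcal{F}'_i) \geq d_S(\mathcal{C}_i) = \min\{2t_i, 2(n-t_i)\}$ by hypothesis. Summing these $r$ inequalities yields $d_f(\mathcal{F}, \mathcal{F}') \geq d_{max}$, and combining with Lemma \ref{upper} forces $d_f(\mathcal{C}) = d_{max}$, i.e., $\mathcal{C}$ is an ODFC.

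For $(1)\Rightarrow(2)$, assume $d_f(\mathcal{C}) = d_{max}$ and fix distinct flags $\mathcal{F}, \mathcal{F}' \in \mathcal{C}$. On the one hand $d_f(\mathcal{F}, \mathcal{F}') \geq d_{max}$; on the other hand the slotwise bounds give $d_f(\mathcal{F}, \mathcal{F}') \leq \sum_{i=1}^{r} \min\{2t_i, 2(n-t_i)\} = d_{max}$. Both inequalities must therefore be tight termwise, so $d_S(\mathcal{F}_i, \mathcal{F}'_i) = \min\{2t_i, 2(n-t_i)\}$ for each $i$. Since this common value is strictly positive, $\mathcal{F}_i \neq \mathcal{F}'_i$ for every $i$, which is exactly disjointness of $\mathcal{C}$; and any two distinct codewords of $\mathcal{C}_i$ then lift (uniquely, by disjointness) to distinct flags in $\mathcal{C}$ at slotwise distance exactly $\min\{2t_i, 2(n-t_i)\}$, hence $d_S(\mathcal{C}_i) = \min\{2t_i, 2(n-t_i)\}$. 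I expect no real technical obstacle here: the whole argument reduces to the decomposition $d_{max} = \sum_i \min\{2t_i, 2(n-t_i)\}$, which turns the global ODFC condition into a simultaneous termwise saturation problem. The one point requiring care is that in $(1)\Rightarrow(2)$ disjointness is \emph{deduced} from the positivity of the slotwise maxima rather than assumed at the outset.
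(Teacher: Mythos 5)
Your argument is correct. Note that the paper itself gives no proof of this statement --- Theorem \ref{ODFC-Proj} is imported from \cite{ans} as a black box --- so there is nothing internal to compare against; but your proof is the standard one and is complete: the identity $d_{max}=\sum_{i=1}^r\min\{2t_i,2(n-t_i)\}$ from Lemma \ref{upper}, the pairwise bound $d_S(\mathcal U,\mathcal V)\le\min\{2t_i,2(n-t_i)\}$ on each Grassmannian, and the observation that a sum of terms each bounded by $m_i$ can reach $\sum_i m_i$ only by termwise saturation together give both implications, with disjointness in $(1)\Rightarrow(2)$ correctly deduced from the strict positivity of each $\min\{2t_i,2(n-t_i)\}$ (using $0<t_i<n$). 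The only cosmetic caveat is the degenerate case $|\mathcal C|=1$, where $d_f(\mathcal C)$ and $d_S(\mathcal C_i)$ are vacuous; this is a convention issue, not a gap.
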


\begin{remark}\label{ODFCrek}
Condition $(2)$ in Theorem \ref{ODFC-Proj} can be simplified as: the code $\mathcal C$ is disjoint and  $d_S(\mathcal C_a)=\min\{2t_a,2(n-t_a)\}, d_S(\mathcal C_b)=\min\{2t_b,2(n-t_b)\}$, where $t_{a}$ and $t_{b}$ are the largest $t_i$ that are less than or equal to $\frac{n}{2}$ and the smallest $t_i$ that are greater than or equal to  $\frac{n}{2}$, respectively.
\end{remark}

	As a consequence, all the $i$-projected codes of an optimum distance flag code have to be $(n,2t_i,t_i)_q$-CDCs if $t_i\leq \lfloor\frac{n}{2}\rfloor$
	and $(n,2n-t_i,t_i)_q$-CDCs if $t_i> \lfloor\frac{n}{2}\rfloor$. By Theorem \ref{cdc-upper}, we have the following result.
	
	\begin{theorem}\label{flagupper}
		Let $n,k$ be positive integers with $\frac{n}{k}\geq 2$. Let $\mathcal{A}\subseteq\{1,2,\ldots,k,n-k,\ldots,n-1\}$.
		Suppose that $\mathcal A\cap \{k,n-k\}\neq\emptyset$ and
		set $n\equiv r\pmod k$, $0\leq r<k$. If $r\neq 0$, then 
		\begin{center}
			$A_q(n,\mathcal A)\leq \frac{q^n-q^r}{q^k-1}-\lfloor\frac{\sqrt{4q^k(q^k-q^r)+1}-(2q^k-2q^r+1)}{2}\rfloor-1$.
		\end{center}
		If $k>\begin{bmatrix}
			r \\
			1 \\
		\end{bmatrix}_q$, then $A_q(n,\mathcal A)\leq \frac{q^n-q^{k+r}}{q^k-1}+1$.
	\end{theorem}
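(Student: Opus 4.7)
The plan is to reduce the bound on $A_q(n,\mathcal A)$ directly to the CDC bound in Theorem \ref{cdc-upper}, using the projection characterization of ODFCs. Let $\mathcal C$ be any $(n,\mathcal A)_q$-ODFC. Since $\mathcal A\cap\{k,n-k\}\neq\emptyset$, I would pick some $t_i\in\mathcal A\cap\{k,n-k\}$. By Theorem \ref{ODFC-Proj}, $\mathcal C$ is disjoint, whence $|\mathcal C|=|\mathcal C_i|$, and its $i$-projected code $\mathcal C_i\subseteq{\cal G}_q(n,t_i)$ is a constant dimension code attaining the maximum subspace distance $\min\{2t_i,2(n-t_i)\}$. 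The hypothesis $n/k\geq 2$ gives $k\leq n-k$, so this minimum equals $2k$ whether $t_i=k$ or $t_i=n-k$.

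Thus $\mathcal C_i$ is an $(n,2k,t_i)_q$-CDC with $t_i\in\{k,n-k\}$. When $t_i=k$, this gives directly $|\mathcal C|=|\mathcal C_i|\leq A^C_q(n,2k,k)$. When $t_i=n-k$, I would invoke the orthogonal-complement duality on the Grassmannian: the map $\mathcal U\mapsto \mathcal U^{\perp}$ is a distance-preserving bijection ${\cal G}_q(n,k)\to {\cal G}_q(n,n-k)$, so $A^C_q(n,2k,n-k)=A^C_q(n,2k,k)$ and the same bound applies. Plugging the two inequalities from Theorem \ref{cdc-upper} into $|\mathcal C|\leq A^C_q(n,2k,k)$ then yields both displayed bounds: the general inequality when $r\neq 0$, and the sharper $\frac{q^n-q^{k+r}}{q^k-1}+1$ when $k>\frac{q^r-1}{q-1}$.

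The argument is essentially a short reduction, and I do not anticipate a genuine obstacle: the nontrivial content is already packaged in Theorem \ref{ODFC-Proj} (which forces disjointness and forces each projected code to be a maximum-distance CDC) and Theorem \ref{cdc-upper} (which supplies the spread-type bound on $A^C_q(n,2k,k)$). The only step beyond quoting these results is the Grassmannian duality needed to reduce the $t_i=n-k$ case to the $t_i=k$ case, which is routine.
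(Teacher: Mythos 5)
Your proposal is correct and is essentially the paper's own argument: the paper derives Theorem \ref{flagupper} in one line from Theorem \ref{ODFC-Proj} (disjointness plus maximum-distance projected codes) and the partial-spread bound of Theorem \ref{cdc-upper}. You are in fact slightly more careful than the paper, which silently applies the dimension-$k$ bound to the $t_i=n-k$ projection; your explicit appeal to the orthogonal-complement isometry $A^C_q(n,2k,n-k)=A^C_q(n,2k,k)$ is exactly the routine step needed to justify that.
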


	When $k$ is a divisor of $n$, Alonso-Gonz\'{a}lez, Navarro-P\'{e}rez and Soler-Escriv\`{a} showed the following result.
	\begin{theorem}\label{dn}{\rm \cite{ans1}}
		Let $k$  be a divisor of $n$, and $\{t_1,t_2,\ldots,t_r\}\subseteq\{1,2,\ldots,k,n-k,\ldots,n-1\}$.
		If $k\in \{t_1,t_2,\ldots,t_r\}$,
		then
		\begin{center}
			$A_q(n, \{t_1,t_2,\ldots,t_r\})=\frac{q^n-1}{q^k-1}$.
		\end{center}
	\end{theorem}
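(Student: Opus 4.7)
The proof decomposes into a matching upper bound and an explicit construction of extremal size.

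For the upper bound, since $k\mid n$ the remainder $r$ in Theorem~\ref{flagupper} equals zero, and the side condition $k>(q^0-1)/(q-1)=0$ is automatic; substituting $r=0$ into the second clause of that theorem yields $A_q(n,\{t_1,\ldots,t_r\})\leq \frac{q^n-q^k}{q^k-1}+1=\frac{q^n-1}{q^k-1}$, exactly the claimed value.

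For the lower bound, by Theorem~\ref{ODFC-Proj} it suffices to exhibit a disjoint flag code of size $N=\frac{q^n-1}{q^k-1}$ whose every projected code attains the maximum CDC distance. I would start from a $k$-spread $\mathcal{S}=\{V_1,\ldots,V_N\}$ of $\mathbb F_q^n$, which exists precisely because $k\mid n$ and has exactly $N$ elements (for concreteness, the Desarguesian spread arising from $\mathbb F_q^n\simeq\mathbb F_{q^k}^{n/k}$). For every $V_i\in\mathcal S$ I would attach a single flag $\mathcal F^{(i)}$ by fixing a complete chain $\{0\}\subset U_1^{(i)}\subset\cdots\subset U_k^{(i)}=V_i$ inside $V_i$ (supplying entries of dimension $t_j\leq k$) together with a complete chain $V_i\subset W_{n-k}^{(i)}\subset\cdots\subset W_{n-1}^{(i)}$ above $V_i$ (supplying entries of dimension $t_j\geq n-k$). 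The flag code is automatically disjoint because the $k$-level is just the spread itself, and for every $t_j\leq k$ the distance $2t_j$ is immediate from $V_i\cap V_{i'}=\{0\}$.

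The hard part is to choose the upward chain so that $d_S\bigl(W_{n-k}^{(i)},W_{n-k}^{(i')}\bigr)=2k$ for all $i\neq i'$; once this is arranged, the nesting propagates the correct distance to every larger $W_{t_j}^{(i)}$. Equivalently, I need $W_{n-k}^{(i)}+W_{n-k}^{(i')}=\mathbb F_q^n$ whenever $i\neq i'$, while retaining $V_i\subset W_{n-k}^{(i)}$. I would realise this via the MRD toolkit of Lemmas~\ref{lem:efc-1} and~\ref{lem:iefc-1}: represent each $V_i$ in reduced row echelon form $(I_k\mid M_i)$ with the matrices $M_i$ drawn from a $[k\times(n-k),k]_q$-MRD code obtained from Theorem~\ref{MRD}, augmented recursively to cover the residual $(n-k)$-dimensional block of the spread, and dually represent each $W_{n-k}^{(i)}$ in inverse reduced row echelon form $(N_i\mid I_{n-k})$, where $N_i$ is tied to $M_i$ by the forced inclusion $V_i\subset W_{n-k}^{(i)}$. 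Because the MRD property transports to the dual representation, Lemma~\ref{lem:iefc-1} then yields $d_S\bigl(W_{n-k}^{(i)},W_{n-k}^{(i')}\bigr)=2\,d_R(N_i,N_{i'})\geq 2k$, which is precisely the distance required, and the count of codewords is $N$ by the recursive spread construction.
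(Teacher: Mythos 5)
Your upper bound coincides with the paper's: Theorem \ref{flagupper} with $r=0$ (where the side condition $k>0$ is vacuous) gives $A_q(n,\{t_1,\ldots,t_r\})\leq\frac{q^n-q^k}{q^k-1}+1=\frac{q^n-1}{q^k-1}$. Your lower-bound strategy --- one flag per element of a $k$-spread, a chain below each spread element and a chain above it, reduced via Theorem \ref{ODFC-Proj} to distance conditions level by level --- is also the strategy behind the paper's Theorem \ref{conc-1}, whose $r=0$ specialization is exactly how this paper recovers the statement (which it otherwise only quotes from \cite{ans1}). You correctly identify the crux: choosing the $(n-k)$-dimensional spaces $W^{(i)}_{n-k}\supseteq V_i$ so that $W^{(i)}_{n-k}+W^{(i')}_{n-k}=\mathbb F_q^n$ for $i\neq i'$.

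Your resolution of that crux, however, has a genuine gap. Writing $V_i=\mathrm{rowspace}(I_k\mid M_i)$ and $W^{(i)}_{n-k}=\mathrm{rowspace}(N_i\mid I_{n-k})$, the containment $V_i\subseteq W^{(i)}_{n-k}$ forces $M_iN_i=I_k$, so $N_i$ must be a right inverse of $M_i$. First, the zero codeword $M_i=O$ of the linear $[k\times(n-k),k]_q$-MRD code does occur in the lifted layer of the spread, and it has no right inverse; concretely, $\mathrm{rowspace}(I_k\mid O)$ is contained in no subspace of the form $\mathrm{rowspace}(N\mid I_{n-k})$, so your normal form already fails for one spread element per recursion level. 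Second, for $k<n-k$ right inverses are highly non-unique, and the claim that ``the MRD property transports to the dual representation'' is precisely what must be proved: you would need to select, simultaneously for all $q^{n-k}-1$ nonzero codewords, right inverses whose pairwise differences all have full column rank $k$. (Only in the square case $n=2k$ is this automatic, via $M^{-1}-M'^{-1}=M^{-1}(M'-M)M'^{-1}$.) Third, the cross-distances at levels $\geq n-k$ between the lifted layer and the recursively constructed residual layers are never verified. The paper's Theorem \ref{conc-1} avoids all three issues by reversing the logic: it starts from a full $[(n-k)\times(n-k),n-k]_q$-MRD code, takes $W_D$ to be the row space of the $(n-k)\times n$ matrix obtained by appending to the whole codeword $D=\bigl(\begin{smallmatrix}A\\B\end{smallmatrix}\bigr)$ a right block equal to $I_k$ over $O$, and reads off $V_D=\mathrm{rowspace}(A\mid I_k)$ from its top $k$ rows; the containment is then free and $\mathrm{rank}(D-D')=n-k$ immediately yields $W_D+W_{D'}=\mathbb F_q^n$, while the special matrices $M(O)$, $M(a)$, $M(a+1)$ together with identifying-vector arguments handle the residual layers. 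To close your argument you must either prove the right-inverse selection claim (including a separate treatment of the zero codeword) or switch to a construction of the latter type.
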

	
	When $n=2k+1$, Sascha Kurz showed the following result by using $k$-partial spread of $\mathbb{F}_q^{2k+1}$. 
	\begin{theorem}\label{nd}{\rm \cite{kurz}}
		For each integer $k\geq 2$, it holds that 
		$A_q(2k+1, \{1,2,\ldots,2k\})=q^{k+1}+1$.
	\end{theorem}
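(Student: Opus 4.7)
The plan is to establish matching upper and lower bounds of $q^{k+1}+1$ on $A_q(2k+1,\{1,\ldots,2k\})$.

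\textbf{Upper bound.} This follows directly from Theorem \ref{flagupper}. Taking $n=2k+1$ and $\mathcal{A}=\{1,2,\ldots,2k\}$, we have $n\equiv 1\pmod{k}$, so $r=1$. Since $k\geq 2$, we get $k>1=\frac{q-1}{q-1}$. Also $k\in\mathcal{A}$ implies $\mathcal{A}\cap\{k,n-k\}\neq\emptyset$. Theorem \ref{flagupper} then yields $A_q(2k+1,\mathcal{A})\leq \frac{q^{2k+1}-q^{k+1}}{q^k-1}+1 = q^{k+1}+1$.

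\textbf{Lower bound via MRD construction.} I would construct an explicit $(2k+1,\{1,\ldots,2k\})_q$-ODFC of size $q^{k+1}+1$. By Theorem \ref{MRD} there exists a $[k\times(k+1),k]_q$-MRD code $\mathcal{M}$ with $q^{k+1}$ codewords whose nonzero pairwise differences all have rank $k$. For each $M\in\mathcal{M}$, attach the $k$-dimensional subspace $U_M=\mathrm{rowspace}([I_k\mid M])$, and pick one additional $k$-subspace $U_\infty$ lying in the last $k+1$ coordinates. A short calculation using the MRD property shows $\mathcal{S}:=\{U_M\}_{M\in\mathcal{M}}\cup\{U_\infty\}$ is a partial $k$-spread of cardinality $q^{k+1}+1$, already realizing the optimum $k$-projection. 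For each element of $\mathcal{S}$ I would complete the chain below dimension $k$ by taking spans of the first $i$ rows of its reduced row echelon generator, and above dimension $k$ by a dual construction carried out in the inverse reduced row echelon form.

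\textbf{Verification and main obstacle.} The distance verification splits by dimension $i$. For $i<k$, the $U_M$-flags share a common identifying vector, so Lemma \ref{lem:efc-1} reduces the subspace distance to a rank distance bounded below by $2i$ thanks to the MRD property; the $U_\infty$-flag is handled by Lemma \ref{lem:efc-2} using Hamming distance of identifying vectors. The case $i=k$ is the partial spread itself, with pairwise distance $2k$. For $i>k$ one passes to the inverse reduced row echelon form and applies Lemmas \ref{lem:iefc} and \ref{lem:iefc-1}. The principal obstacle is this upper half: extending every $U_M$ by the \emph{same} ascending sequence of external basis vectors yields, for $(k+j)$-projections, intersection dimensions of the form $(k-\mathrm{rank}\,D)+j$ where $D$ is an appropriate submatrix of $M-M'$, which is one larger than what the distance bound requires. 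One must therefore tie the upper extension to $M$ itself---essentially by transposing or dualizing the MRD structure---so that the $(k+j)$-projected code inherits an MRD-style rank bound from a companion MRD code on the complementary block. Once this dualization is in place, Theorem \ref{ODFC-Proj} certifies that the resulting flag code is an ODFC of the claimed size $q^{k+1}+1$.
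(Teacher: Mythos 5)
Your upper bound is correct and is exactly how the paper obtains it: with $n=2k+1$ one has $r=1$, $k\geq 2>\frac{q-1}{q-1}$, and Theorem \ref{flagupper} gives $A_q(2k+1,\{1,\ldots,2k\})\leq\frac{q^{2k+1}-q^{k+1}}{q^k-1}+1=q^{k+1}+1$. (The paper states Theorem \ref{nd} as a citation of Kurz and then observes that it is recovered by Corollary \ref{rek}, i.e.\ by Theorem \ref{flagupper} together with the construction of Theorem \ref{conc-1} specialized to $a=1$, $r=1$.) Your $k$-projection --- the lifted MRD partial spread of size $q^{k+1}+1$ together with the descending chains controlled by Lemmas \ref{lem:efc-1} and \ref{lem:efc-2} --- is also sound.

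The genuine gap is exactly the step you label ``the principal obstacle.'' You correctly diagnose that extending all $U_M$ by a common ascending chain fails, but your proposed repair (``transposing or dualizing the MRD structure,'' a ``companion MRD code on the complementary block'') is never made concrete, and that is where the entire content of the lower bound lies. An independently chosen companion code will not do: for the $(k+j)$-projections the relevant quantity is the rank of the stacked difference of the spread block \emph{and} the extension block, so the two must be jointly MRD. The paper's construction achieves this by drawing both from one square $[(k+1)\times(k+1),k+1]_q$-MRD code $\mathcal D$: each nonzero $D\in\mathcal D$ is split as $D=\bigl(\begin{smallmatrix}\boldsymbol A\\ \boldsymbol B\end{smallmatrix}\bigr)$ with $\boldsymbol A$ the top $k\times(k+1)$ block, the spread element is $\mathrm{rowspace}(\boldsymbol A\mid \boldsymbol I_k)$ (identity on the right, i.e.\ inverse reduced row echelon form, rather than your $[\,\boldsymbol I_k\mid M\,]$), the step to dimension $k+1$ appends the row $\boldsymbol B$, and dimensions $k+2,\ldots,2k$ append rows of $(\boldsymbol A\mid \boldsymbol O)$; since $D-D'$ has full rank $k+1$, the $(k+1)$-projections meet trivially, and the remaining cross-distances are checked via identifying and inverse identifying vectors. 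Two extra flags $M(a)$, $M(a+1)$ replace the excluded zero codeword and bring the size to $q^{k+1}+1$. Without an explicit extension of this kind and the accompanying rank computations, your lower bound remains a plan rather than a proof.
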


	\section{Multilevel construction based on MRD codes}
	
	The multilevel construction  \cite{es09} and inverse multilevel construction \cite{lj} for constant dimension codes are simple but effective, which generalize the lifted MRD codes. 
	In this section, we use  multilevel construction and inverse multilevel construction \cite{lj} to construct $(n,\mathcal A)_q$-ODFCs, where $\mathcal A\subseteq \{1,2,\ldots,n-1\}$, by MRD codes.

	\begin{theorem}\label{conc-1}
		Let $n,k$ be positive integers with $n\geq 2k$. Let $\mathcal{A}\subseteq\{1,2,\ldots,k,n-k,\ldots,n-1\}$.
		If $\mathcal A\cap \{k,n-k\}\neq\emptyset$,  then
		\begin{align*}
			A_q(n,\mathcal A)\geq\frac{q^n-q^{k+r}}{q^k-1}+1,
		\end{align*}
		where $r\equiv n\pmod k$ and $0\leq r<k$.
	\end{theorem}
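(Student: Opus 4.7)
The plan is to apply Theorem \ref{ODFC-Proj}, which reduces the construction of an $(n, \mathcal A)_q$-ODFC of size $M$ to exhibiting a disjoint flag code of size $M$ whose every projection is a CDC of maximum subspace distance. Since the orthogonal-complement involution on flags preserves flag distance and exchanges dimensions $t \leftrightarrow n - t$, the hypothesis $\mathcal A \cap \{k, n - k\} \neq \emptyset$ lets me assume $k \in \mathcal A$ without loss of generality; the case $n - k \in \mathcal A$, $k \notin \mathcal A$ is handled by building the ODFC for $n - \mathcal A$ (which then has $k \in n - \mathcal A$) and dualizing, yielding the same size $M$.

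I then build an $(n, 2k, k)_q$-CDC $\mathcal C$ of size $M = \frac{q^n - q^{k + r}}{q^k - 1} + 1$ via the standard multilevel construction. Using identifying vectors $v^{(i)} = (0^{(i-1)k}, 1^k, 0^{n - ik})$ for $i = 1, \ldots, s - 1$ with $s = \lfloor n / k \rfloor$, each lifted via an $[k \times (n - ik), k]_q$-MRD code supplied by Theorem \ref{MRD}, gives $q^{n - ik}$ codewords per identifying vector; Lemma \ref{lem:efc-2} delivers $d_S \geq d_H = 2k$ across distinct identifying vectors (which have disjoint supports), while Lemma \ref{lem:efc-1} delivers $d_S = 2 d_R \geq 2k$ within a common identifying vector. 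The geometric sum $\sum_{i=1}^{s-1} q^{n - ik} = \frac{q^n - q^{k + r}}{q^k - 1}$ accounts for all but one codeword, and one additional codeword supported in the last $k + r$ coordinates provides the final "$+1$", matching the sharp bound of Theorem \ref{cdc-upper}.

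For each $\mathcal V \in \mathcal C$ I build a flag $F(\mathcal V)$ of type $\mathcal A$ with $F_k(\mathcal V) = \mathcal V$. For each $t \in \mathcal A$ with $t < k$, I take $F_t(\mathcal V)$ to be the rowspace of the first $t$ rows of the RREF matrix of $\mathcal V$; since distinct codewords of $\mathcal C$ intersect trivially, any $t$-subspaces inherited from them also intersect trivially, so the $t$-projection is automatically an $(n, 2t, t)_q$-CDC of maximum distance $2t$. For each $t \in \mathcal A$ with $t > n - k$, I dualize the construction: set $F_t(\mathcal V) = G(\mathcal V)^\perp$, where $G(\mathcal V) \subseteq \mathcal V^\perp$ is an $(n - t)$-dimensional subspace obtained from the first $n - t$ rows of the inverse RREF of $\mathcal V^\perp$ (whose inverse identifying vector is $\mathbf 1 - v(\mathcal V)$), realized as the rowspace of a specific sub-matrix so that $G(\mathcal V)$ inherits the MRD block data (suitably transposed) from the $\boldsymbol A$-component of $\mathcal V$. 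The nesting $F_{t_1} \subsetneq F_{t_2}$ for $t_1 < t_2$ comes from choosing a consistent nested family of sub-matrices.

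The main obstacle will be verifying that for $t > n - k$ the $t$-projection achieves the maximum distance $2(n - t)$. By duality, this is equivalent to showing that $\{G(\mathcal V) : \mathcal V \in \mathcal C\}$ is an $(n, 2(n - t), n - t)_q$-CDC, i.e., its members intersect pairwise trivially even though the surrounding $\mathcal V^\perp$'s share an $(n - 2k)$-dimensional subspace. The analysis splits according to whether the induced inverse identifying vectors of $G(\mathcal V^{(1)})$ and $G(\mathcal V^{(2)})$ coincide: Lemma \ref{lem:iefc} supplies a Hamming distance lower bound in the distinct-identifying-vector case (which I verify to be at least $2(n - t)$ from the support-overlap pattern of the underlying $v^{(i)}$'s), and Lemma \ref{lem:iefc-1} reduces the common-identifying-vector case to twice the rank distance of MRD sub-blocks, bounded below by $n - t$ via the Singleton-type bound on the MRD code. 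Combining these gives the required maximal distance; disjointness of the overall flag code is automatic from the injectivity of the $k$-projection; and Theorem \ref{ODFC-Proj} then delivers an $(n, \mathcal A)_q$-ODFC of size $M$, establishing $A_q(n, \mathcal A) \geq M$.
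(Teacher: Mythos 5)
Your overall skeleton is the right one (reduce via Theorem \ref{ODFC-Proj} to a disjoint flag code whose projections are maximum-distance CDCs, build a partial $k$-spread of size $\frac{q^n-q^{k+r}}{q^k-1}+1$ by a multilevel construction, and extend it to flags), and your counting of the $k$-projected code is correct. The WLOG reduction to $k\in\mathcal A$ by orthogonal complementation is legitimate but does not simplify the real work, since $\mathcal A$ may still contain dimensions in $\{n-k,\ldots,n-1\}$, so the upper levels of the flag must be constructed and analyzed in any case.

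The genuine gap is in the construction and verification of the levels $t\geq n-k$, which is precisely the hard part of the theorem. You propose to take $G(\mathcal V)\subseteq\mathcal V^\perp$ of dimension $n-t$ from rows of the inverse reduced row echelon form of $\mathcal V^\perp$ and to bound $d_S$ via Lemmas \ref{lem:iefc} and \ref{lem:iefc-1}, but this scheme breaks down across different identifying-vector families. Concretely, for $t=n-k$ and a codeword $\mathcal V$ with identifying vector $(0^{(i-1)k},1^k,0^{n-ik})$ and generator $(O\,|\,I_k\,|\,A_i)$, the dual is generated by a matrix whose $k$ rightmost pivots sit in the last $k$ columns for \emph{every} $i$; hence the natural choices of $G(\mathcal V)$ all carry the \emph{same} inverse identifying vector $(0^{n-k},1^k)$, so Lemma \ref{lem:iefc} yields only the trivial bound $0$, while their non-pivot blocks are (transposed pieces of) codewords of \emph{different} MRD codes occupying disjoint column ranges, so Lemma \ref{lem:iefc-1} gives no usable rank-distance bound either --- indeed, for the zero MRD codewords in two different families these $G$'s coincide outright, destroying even injectivity of the $(n-k)$-projection. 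More fundamentally, fixing an arbitrary maximum partial spread first and then hoping to extend it upward is not obviously possible; the extension is the content of the theorem. The paper resolves this by \emph{coupling} the two levels: each flag is generated by the successive rows of a single invertible matrix built from one square MRD codeword $D=\bigl(\begin{smallmatrix}\boldsymbol A\\ \boldsymbol B\end{smallmatrix}\bigr)$, whose top $k$ rows produce the $k$-dimensional subspace and whose full body (padded with identity blocks, with the block $\boldsymbol A$ or $\boldsymbol C$ deliberately repeated in the bottom rows) produces the $(n-k)$-dimensional one; the distances at levels $t\geq n-k$ are then established by direct rank computations on stacked generator matrices rather than by identifying-vector arguments. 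Without this coupling, or some substitute for it, your verification at the upper levels does not go through.
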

	
	\begin{proof} Set $n=(a+1)k+r$.
		
		{\bf Step 1.} We first construct the set $\mathcal{C}$ of sequences of subspaces as follows.
		
		Let $\mathcal{D}_{a-1}$ be an $[(n-k)\times(n-k),n-k]_q$-MRD code, which exists by Theorem \ref{MRD}. For each nonzero matrix $D=\left(
		\begin{array}{c}
			\boldsymbol A \\
			\boldsymbol B \\
		\end{array}
		\right)\in\mathcal D_{a-1}$ with rank$(\boldsymbol A)=k$, define an invertible 
		matrix of order $n$ as follows:
		\begin{align*}
			M(D)=\left(
			\begin{array}{cc}
				\boldsymbol A & \boldsymbol I_{k} \\
				\boldsymbol B & \boldsymbol O \\
				\boldsymbol A & \boldsymbol O \\
			\end{array}
			\right).
		\end{align*}
		If $a>1$, let $\mathcal{D}_i$ be an  $[(n-(i+2)k)\times (n-(i+2)k),n-(i+2)k]_q$-MRD code for $0\leq i\leq a-2$,  which exists by Theorem \ref{MRD}. For each nonzero matrix $G_i=\left(\begin{array}{c}
			\boldsymbol C \\
			\boldsymbol D \\
		\end{array}
		\right)\in \mathcal{D}_i$ with rank$(\boldsymbol C)=k$, define
		\begin{align*}
			M(G_i)=
			\left(
			\begin{array}{cccc}
				\boldsymbol O & \boldsymbol I_k & \boldsymbol C & \boldsymbol O \\
				\boldsymbol O & \boldsymbol O & \boldsymbol D & \boldsymbol O \\
				\boldsymbol I_{ik} & \boldsymbol O & \boldsymbol O & \boldsymbol O \\	
				\boldsymbol O & \boldsymbol O & \boldsymbol O & \boldsymbol I_k \\
				\boldsymbol O & \boldsymbol O & \boldsymbol C & \boldsymbol O \\
			\end{array}
			\right).
		\end{align*}
		For the zero matrix $G_i=O\in \mathcal{D}_i$, define
		\begin{align*}
			M(O)=
			\left( 
			\begin{array}{ccccc}
				\boldsymbol O & \boldsymbol I_k & \boldsymbol O  & \boldsymbol O& \boldsymbol O \\
				\boldsymbol O & \boldsymbol O & \boldsymbol O & \boldsymbol I_{n-(i+3)k} & \boldsymbol O \\
				\boldsymbol I_{ik} & \boldsymbol O & \boldsymbol O  & \boldsymbol O & \boldsymbol O \\
				\boldsymbol O & \boldsymbol O &  \boldsymbol O &\boldsymbol O & \boldsymbol I_k \\
				\boldsymbol O & \boldsymbol O &  \boldsymbol I_{k} & \boldsymbol O & \boldsymbol O \\
			\end{array}
			\right),
		\end{align*}
		Define two invertible matrices as follows:
		\begin{align*}
			M(a)=\left(
			\begin{array}{ccc}
				\boldsymbol O & \boldsymbol O  & \boldsymbol I_k \\
				\boldsymbol O &\boldsymbol I_{n-2k} & \boldsymbol O \\ 
				\boldsymbol I_{k} &\boldsymbol O & \boldsymbol O \\
			\end{array}
			\right),
		\end{align*} 
		\begin{align*}
			M(a+1)=\left(
			\begin{array}{ccc}
				\boldsymbol O  & \boldsymbol I_{k} & \boldsymbol O \\
				\boldsymbol I_{n-2k} & \boldsymbol O & \boldsymbol O \\
				\boldsymbol O &\boldsymbol O & \boldsymbol I_{k} \\
			\end{array}
			\right).
		\end{align*}
		
		Let 
		\[
		\begin{array}{rl}
			\mathcal{M}=&\{M(D)\colon D\in \mathcal{D}_{a-1}, D\neq O\}\\
			& \cup\{M(G_i)\colon G_i\in \mathcal{D}_i,0\leq i\leq a-2\}\\
			& \cup \{M(a),M(a+1)\}.\\
		\end{array}
		\]
		It is easy to see that each matrix of $\mathcal{M}$ is an invertible $n\times n$ matrix. Note that $\mathcal{M}=\{M(D)\colon D\in \mathcal{D}_{a-1}, D\neq O\}\cup \{M(a),M(a+1)\}$ if $a=1$. 
		
		For each matrix $M\in \mathcal{M}$ and  $1\leq j\leq n$, let $\mathcal{F}(M_{[j]})=\text{rowspace}(M_{[j]})$,
		\[
		\begin{array}{l}
			\mathcal{F}(M)=(\mathcal{F}(M_{[1]}),\ldots,\mathcal{F}(M_{[k]}), \mathcal{F}(M_{[n-k]}),\ldots,\mathcal{F}(M_{[n-1]})),
		\end{array}
		\]
		where $M_{[j]}$ stands for  the top $j \times n$ submatrix of
$M$.  

We claim that $\mathcal{C}=\{\mathcal{F}(M)\colon M\in \mathcal{M}\}$ is the desired flag code. 
		
		{\bf Step 2.} We show that $\mathcal{C}$ is the set of flags of type $(1,2,\ldots,k,n-k,\ldots,n-1)$ and $|\mathcal{C}|=\frac{q^n-q^{k+r}}{q^k-1}+1$.
		
		Since each matrix $M$ of $\mathcal{M}$ is an invertible $n\times n$ matrix, we have $\dim (\mathcal{F}(M_{[j]}))=j$ for $j\in \{1,\ldots,k,n-k,\ldots,n-1\}$. Clearly, $\{0\}\subsetneq \mathcal{F}(M_{[1]})\subsetneq \mathcal{F}(M_{[2]})\subsetneq \cdots \subsetneq \mathcal{F}(M_{[k]}) \subsetneq  \mathcal{F}(M_{[n-k]}) \subsetneq \mathcal{F}(M_{[n-k+1]}) \subsetneq \cdots \subsetneq \mathcal{F}(M_{[n-1]})\subsetneq \mathbb{F}_q^n$. Therefore, $\mathcal{F}(M)$ is a flag of type $(1,\ldots,k,n-k,\ldots,n-1)$.
		
		Clearly, $|\mathcal{C}|=|\mathcal{M}|$. Since $\mathcal D_{a-1}$ is an $[(n-k)\times(n-k),n-k]_q$-MRD code and each $\mathcal{D}_i$ is an $[(n-(i+2)k)\times(n-(i+2)k),n-(i+2)k]_q$-MRD code, by the construction of $\mathcal{M}$ we have 
		\begin{align*}
			|\mathcal C|&=\sum\limits_{i=0}^{a-1}|\mathcal D_{i}|+1=q^{ak+r}+ \sum\limits_{i=1}^{a-1}{q^{ik+r}}+1\\ &=\sum\limits_{i=1}^{a}{q^{ik+r}}+1 =\frac{q^n-q^{k+r}}{q^k-1}+1.   
		\end{align*}
		
		{\bf Setp 3.} We prove that $\mathcal C$ is an optimum distance flag code. By Theorem \ref{ODFC-Proj} and Remark \ref{ODFCrek}, it suffices to show that the $t$-projected $\mathcal C^t$ of $\mathcal C$ attains the maximum possible distance for $t\in \{k,n-k\}$, where $\mathcal C^t=\bigcup\limits_{i=0}^{a+1}\mathcal C_i^t$, $\mathcal{C}_{a-1}^t=\{\mathcal{F}(M(D)_{[t]})\colon D\in \mathcal{D}_{a-1}, D\neq \boldsymbol O\}$, $\mathcal{C}_i^t=\{\mathcal{F}(M(G)_{[t]})\colon G\in \mathcal{D}_i\}$, $\mathcal{C}_{a}^t=\{\mathcal{F}(M(a)_{[t]})\}$, and $\mathcal{C}_{a+1}^t=\{\mathcal{F}(M(a+1)_{[t]})\}$.
We distinguish two cases:
		
		\begin{itemize}
			\item[$(1)$] $t=k$.
			
			\item$(1.1)$ We shall prove that $d_S(\mathcal{C}_i^k)\geq 2k$ for $0\leq i\leq a-1$. 
			
			For $0\leq i\leq a-2$, since $\mathcal{D}_{i}$ is an 
			$[(n-(i+2)k)\times (n-(i+2)k),n-(i+2)k]_q$-MRD code,
			$\{G_{[k]}\colon G\in \mathcal{D}_{i}\}$ is a $[k\times (n-(i+2)k),k]_q$-MRD code. For any two distinct subspaces $\mathcal U=\text{rowspace}(O_{k\times ik},I_k, G_{[k]},O)$, $\mathcal V=\text{rowspace}(O_{k\times ik},I_k, G'_{[k]},O) \in \mathcal C_{i}^k$, by Lemma \ref{lem:efc-1} we have $d_S(\mathcal U,\mathcal V)=2d_R(G_{[k]},G'_{[k]})= 2k$.

			Since $\mathcal{D}_{a-1}$ is an 
		$[(n-k)\times (n-k),n-k]_q$-MRD code,
		$\{D_{[k]}\colon D\in \mathcal{D}_{a-1}\}$ is a $[k\times (n-k),k]_q$-MRD code. For any two distinct subspaces $\mathcal U=\text{rowspace}(D_{[k]},I_k,O)$, $\mathcal V=\text{rowspace}(D'_{[k]},I_k,O) \in \mathcal C_{a-1}^k$, by Lemma \ref{lem:iefc-1} we have $d_S(\mathcal U,\mathcal V)=2d_R(D_{[k]},D'_{[k]})= 2k$. 
			
			\item$(1.2)$  We shall prove that $d_S(\mathcal U,\mathcal V)\geq 2k$ for any two subspaces $\mathcal U\in \mathcal C^k_{a}, \mathcal V \in \mathcal C_i^k$, where $0\leq i\leq a-1$. 
			
			Since the indentifying vector of $\mathcal U$ is  $(\underbrace{0\cdots0}_{n-k}\underbrace{1\cdots1}_{k})$ and the indentifying vector of $\mathcal V$ is $(\underbrace{0\cdots0}_{ik}\underbrace{1\cdots1}_{k}\underbrace{0\cdots0}_{n-(i+1)k})$ if $i<a-1$ or $(\underbrace{\alpha }_{n-k}\underbrace{0\cdots0}_{k})$ with $wt(\alpha)=k$ if $i=a-1$,  we have that $d_S(\mathcal U,\mathcal V)\geq 2k$ by Lemma\ref{lem:efc-2}.
			
			\item$(1.3)$  We shall prove that $d_S(\mathcal U,\mathcal V)\geq 2k$ for any two subspaces $\mathcal U\in \mathcal C^k_{a+1}, \mathcal V \in \mathcal C_i^k$, where $0\leq i\leq a$.
			
			Since both the identifying vector  and inverse identifying vector of $\mathcal U$ are  $(\underbrace{0\cdots0}_{n-2k}\underbrace{1\cdots1}_{k}\\\underbrace{0\cdots0}_{k})$, the identifying vector of $\mathcal V$ is $(\underbrace{0\cdots0}_{ik}\underbrace{1\cdots1}_{k}\underbrace{0\cdots0}_{n-(i+1)k})$ when $i\leq a-2$, the inverse identifying vector of $\mathcal V$ is $(\underbrace{0\cdots0}_{n-k}\underbrace{1\cdots1}_{k})$ when $i= a-1$, and both the identifying vector and  inverse identifying vector of $\mathcal V$ are $(\underbrace{0\cdots0}_{n-k}\underbrace{1\cdots1}_{k})$ when $i= a$, we have that $d_S(\mathcal U,\mathcal V)\geq 2k$ by Lemmas    \ref{lem:efc-2} and  \ref{lem:iefc}.

			\item$(1.4)$ We shall prove that $d_S(\mathcal U,\mathcal V)\geq 2k$ for any two subspaces $\mathcal U\in \mathcal C^k_{i}, \mathcal V \in \mathcal C_j^k$, where $0\leq i< j\leq a-2$.
			
		Since the indentifying vectors of $\mathcal U$ and $\mathcal V$ are $(\underbrace{0\cdots0}_{ik}\underbrace{1\cdots1}_{k}\underbrace{0\cdots0}_{n-(i+1)k})$ and $(\underbrace{0\cdots0}_{jk}\underbrace{1\cdots1}_{k}\\ \underbrace{0\cdots0}_{n-(j+1)k})$ respectively,
	we have that $d_S(\mathcal U,\mathcal V)\geq 2k$ by Lemma
	\ref{lem:efc-2}.
	\item$(1.5)$  We shall prove that $d_S(\mathcal U,\mathcal V)\geq 2k$ for any two subspaces $\mathcal U\in \mathcal C^k_{a-1}, \mathcal V \in \mathcal C^k_{i}$, where $0\leq i\leq a-2$. 

		Since the inverse indentifying vectors of $\mathcal U$ and $\mathcal V$ are $(\underbrace{0\cdots0}_{n-k}\underbrace{1\cdots1}_{k})$ and $(\underbrace{\alpha}_{n-k}\underbrace{0\cdots0}_{k})$ with $wt(\alpha)=k$ respectively,
we have that $d_S(\mathcal U,\mathcal V)\geq 2k$ by Lemma
\ref{lem:iefc}.
			
			\item[$(2)$]  $t=n-k$.
			
			\item$(2.1)$  We shall prove that $d_S(\mathcal{C}_i^t)\geq 2k$ for $0\leq i\leq a-1$.
			
			\item $(2.1.1)$ For any two distinct subspaces $\mathcal U, \mathcal V \in \mathcal C_{a-1}^{n-k}$, we have 
				\begin{align*}
				d_S(\mathcal U,\mathcal V)&=2{\rm rank}\left(
				\begin{array}{ccccc}
					\boldsymbol A & \boldsymbol I_k \\
					\boldsymbol B & \boldsymbol O\\
					\boldsymbol A' & \boldsymbol I_k \\
					\boldsymbol B' & \boldsymbol O\\
				\end{array}
				\right)-2(n-k)\\
				&=2{\rm rank}\left(
				\begin{array}{ccccc}
					\boldsymbol A-\boldsymbol A' & \boldsymbol O \\
					\boldsymbol B-\boldsymbol B' & \boldsymbol O\\
					\boldsymbol A' & \boldsymbol I_k \\
				\end{array}
				\right)-2(n-k)\\&=2(n-k+k-n+k)=2k,
			\end{align*}
			where the last equality holds because $\left(
			\begin{array}{c}
				\boldsymbol A \\
				\boldsymbol B \\
			\end{array}
			\right),\left (\begin{array}{c}
				\boldsymbol A' \\
				\boldsymbol B' \\
			\end{array}
			\right)$ are two distinct nonzero codewords of $[(n-k)\times (n-k),n-k]_q$-MRD code $\mathcal D$.

			\item$(2.1.2)$ For any two distinct subspaces $\mathcal U=\mathcal{F}(M(G)_{[n-k]}), \mathcal V=\mathcal{F}(M(G')_{[n-k]}) \in \mathcal C_i^{n-k}$ with $G\neq O, G'\neq O$ and $0\leq i<a-1$, similar to (2.1.1), it is easy to verify that $d_S(\mathcal U, \mathcal V )\geq 2k$.

			\item$(2.1.3)$ For any subspace $\mathcal U=\mathcal{F}(M(G)_{[n-k]}), \mathcal V=\mathcal{F}(M(O)_{[n-k]}) \in \mathcal C_i^{n-k}$ with $G\neq O$ and $0\leq i<a-1$, we have
			{\begin{align*}
					d_S(\mathcal U,\mathcal V)&=2{\rm rank}\left(
					\begin{array}{cccc}
						\boldsymbol I_{ik} & \boldsymbol O & \boldsymbol O & \boldsymbol O \\
						\boldsymbol O & \boxed{\boldsymbol I_k} & \boldsymbol C & \boldsymbol O \\
						\boldsymbol O & \boldsymbol O & \boldsymbol D & \boldsymbol O \\
						\boldsymbol O & \boldsymbol O & \boldsymbol O & \boldsymbol I_k \\
						\hdashline[2pt/2pt]
						\boldsymbol I_{ik} & \boldsymbol O & \boldsymbol O & \boldsymbol O \\
						\boldsymbol O & \boxed{\boldsymbol I_k} & \boldsymbol O & \boldsymbol O \\
						\boldsymbol O & \boldsymbol O & * & * \\
					\end{array}
					\right)-2(n-k)
\\&\geq 2{\rm rank}\left(
					\begin{array}{ccc;{2pt/2pt}c}
						\boldsymbol I_{ik} & \boldsymbol O & \boldsymbol O & \boldsymbol O \\
						\boldsymbol O & \boldsymbol I_k & \boldsymbol O & \boldsymbol O \\
						\boldsymbol O & \boldsymbol O  & \boldsymbol C & \boldsymbol O \\
						\boldsymbol O & \boldsymbol O & \boldsymbol D & \boldsymbol O \\\hdashline[2pt/2pt]
						\boldsymbol O & \boldsymbol O & \boldsymbol O & \boldsymbol I_k \\
					\end{array}
					\right)-2(n-k)\\&=2(ik+k+n-(i+2)k+k)-2(n-k)\\&=2k,
			\end{align*}}
			where $*$ stands for a submatrix.
			
			\item$(2.2)$ We shall prove that $d_S(\mathcal U,\mathcal V)\geq 2k$
	for any two subspaces $\mathcal U\in \mathcal C^{n-k}_{a}, \mathcal V \in \mathcal C_i^{n-k}$, where $0\leq i\leq a-1$.
			
			Since the indentifying vector of $\mathcal U$ is  $(\underbrace{0\cdots0}_{k}\underbrace{1\cdots1}_{n-k})$, the indentifying vector of $\mathcal V$ is of the form $(\underbrace{1\cdots1}_{(i+1)k}\underbrace{\alpha_2}_{n-(i+2)k}\underbrace{1\cdots1}_{k})$ for some $\alpha_2\in \mathbb F_2^{n-(i+2)k}$ of Hamming weight $n-(i+3)k$ when $i\leq a-2$, and the indentifying vector of $\mathcal V$ is of the form $(\underbrace{1\cdots1}_{n-k}\underbrace{0\cdots0}_{k})$ when $i=a-1$ we have that $d_S(\mathcal U,\mathcal V)\geq 2k$ by Lemma
	\ref{lem:efc-2}.
	
			\item$(2.3)$ We shall prove that $d_S(\mathcal U,\mathcal V)\geq 2k$ for any two subspaces $\mathcal U\in \mathcal C^{n-k}_{a+1}, \mathcal V \in \mathcal C_i^{n-k}$, where $0\leq i\leq a$.
			
		Since the inverse indentifying vector of $\mathcal U$ is  $(\underbrace{1\cdots1}_{n-k}\underbrace{0\cdots0}_{k})$ and the inverse indentifying vector of $\mathcal V$ is of the form $(\underbrace{\alpha}_{n-k}\underbrace{1\cdots1}_{k})$ for some $\alpha_2\in \mathbb F_2^{n-k}$ of Hamming weight $n-2k$,  we have that $d_S(\mathcal U,\mathcal V)\geq 2k$ by Lemma
	\ref{lem:iefc}.
			\item$(2.4)$ We shall prove that $d_S(\mathcal U,\mathcal V)\geq 2k$ for any two subspaces $\mathcal U\in \mathcal C^{n-k}_a$, $\mathcal V \in \mathcal C^{n-k}_{i}$, where $0\leq i\leq a-2$.

			\item$(2.4.1)$  Let $\mathcal U=\mathcal{F}(M(D)_{[n-k]})\in \mathcal C^{n-k}_{a-1}, \mathcal V=\mathcal{F}(M(G)_{[n-k]}) \in \mathcal C^{n-k}_{i}$ with $D$ and $G$ nonzero matrices. We have
			{ \small\begin{align*}
					d_S(\mathcal U,\mathcal V)&=2{\rm rank}\left(
					\begin{array}{c;{2pt/2pt}c}
						\boldsymbol A & \boxed{\boldsymbol I_k} \\
						\boldsymbol B & \boldsymbol O \\
						\hdashline[2pt/2pt]
						\begin{array}{ccc}
							\boldsymbol I_{ik} & \boldsymbol O & \boldsymbol O  \\
							\boldsymbol O & \boldsymbol I_k & \boldsymbol C  \\
							\boldsymbol O & \boldsymbol O & \boldsymbol D  \\
							\boldsymbol O & \boldsymbol O & \boldsymbol O  \\
						\end{array}
						& \begin{array}{c}
							\boldsymbol O \\
							\boldsymbol O \\
							\boldsymbol O \\
							\boxed{\boldsymbol I_k} \\
						\end{array} \end{array}
					\right)-2(n-k+1)\\&=2{\rm rank}\left(
					\begin{array}{c;{2pt/2pt}c}
						\boldsymbol A & \boldsymbol O \\
						\boldsymbol B & \boldsymbol O \\
						\hdashline[2pt/2pt]
						\begin{array}{ccc}
							\boldsymbol I_{ik} & \boldsymbol O & \boldsymbol O  \\
							\boldsymbol O & \boldsymbol I_k & \boldsymbol C  \\
							\boldsymbol O & \boldsymbol O & \boldsymbol D  \\
							\boldsymbol O & \boldsymbol O & \boldsymbol O  \\
						\end{array}
						& \begin{array}{c}
							\boldsymbol O \\
							\boldsymbol O \\
							\boldsymbol O \\
							\boldsymbol I_k \\
						\end{array}
						\\
					\end{array}
					\right)-2(n-k+1)\\&
					\geq 2[{\rm rank}\left(
					\begin{array}{c}
						\boldsymbol A  \\
						\boldsymbol B  \\
					\end{array}
					\right)+k]-2(n-k+1)=2k-2.
			\end{align*}}
			
			\item$(2.4.2)$  Let $\mathcal U=\mathcal{F}(M(D)_{[n-k]})\in \mathcal C^{n-k}_{a-1}, \mathcal V=\mathcal{F}(M(O)_{[n-k]}) \in \mathcal C^{n-k}_{i}$ with  $D$ a nonzero matrix. We have
			{ \small\begin{align*}
					d_S(\mathcal U,\mathcal V)&=2{\rm rank}\left(
					\begin{array}{c;{2pt/2pt}c}
						\boldsymbol A & \boxed{\boldsymbol I_k} \\
						\boldsymbol B& \boldsymbol O \\
						\hdashline[2pt/2pt]
						\begin{array}{ccc}
							\boldsymbol I_{ik} & \boldsymbol O & \boldsymbol O  \\
							\boldsymbol O & \boldsymbol I_k & \boldsymbol O  \\
							\boldsymbol O & \boldsymbol O & *\\
							\boldsymbol O & \boldsymbol O & \boldsymbol O  \\
							\boldsymbol O & \boldsymbol O &  *
						\end{array}
						& \begin{array}{c}
							\boldsymbol O \\
							\boldsymbol O \\
							\boldsymbol O \\
							\boxed{\boldsymbol I_k} \\
						\end{array}
						\\
					\end{array}
					\right)-2t\\&=2{\rm rank}\left(
					\begin{array}{c;{2pt/2pt}c}
						\boldsymbol A & \boldsymbol O \\
						\boldsymbol B& \boldsymbol O \\
						\hdashline[2pt/2pt]
						\begin{array}{ccc}
							\boldsymbol I_{ik} & \boldsymbol O & \boldsymbol O  \\
							\boldsymbol O & \boldsymbol I_k & \boldsymbol O  \\
							\boldsymbol O & \boldsymbol O & * \\
							\boldsymbol O & \boldsymbol O & \boldsymbol O  \\
						\end{array}
						& \begin{array}{c}
							\boldsymbol O \\
							\boldsymbol O \\
							\boldsymbol O \\
							\boxed{\boldsymbol I_k} \\
						\end{array}
						\\
					\end{array}
					\right)-2(n-k)\\&
					\geq 2[{\rm rank}\left(
					\begin{array}{c}
						\boldsymbol A  \\
						\boldsymbol B  \\
					\end{array}
					\right)+k]-2(n-k)\\&=2k-2.
			\end{align*}}
			
			\item$(2.5)$ We shall prove that $d_S(\mathcal U,\mathcal V)\geq 2k$ for any two subspaces $\mathcal U\in \mathcal C^{n-k}_i$, $\mathcal V \in \mathcal C^{n-k}_{j}$, where $0\leq i<j\leq a-2$.

			\item$(2.5.1)$
		For any two subspaces $\mathcal U=\mathcal{F}(M(G)_{[n-k]})\in \mathcal C^{n-k}_{a-1}, \mathcal V=\mathcal{F}(M(G')_{[n-k]}) \in \mathcal C^{n-k}_{i}$ with $G\in \mathcal{D}_i$, $G'\in \mathcal{D}_{j}$ and $G\neq O$, we have
			{\begin{align*}
					d_S(\mathcal U,\mathcal V)&=2{\rm rank}\left(
					\begin{array}{cccc}
						\boldsymbol I_{ik} & \boldsymbol O & \boldsymbol O & \boldsymbol O \\
						\boldsymbol O & \boxed{\boldsymbol I_k} & \boldsymbol C & \boldsymbol O \\
						\boldsymbol O & \boldsymbol O & \boldsymbol D & \boldsymbol O \\
						\boldsymbol O & \boldsymbol O & \boldsymbol O & \boldsymbol I_k \\
						\hdashline[2pt/2pt]
						\boldsymbol I_{ik} & \boldsymbol O & \boldsymbol O & \boldsymbol O \\
						\boldsymbol O & \boxed{\boldsymbol I_k} & \boldsymbol O & \boldsymbol O \\
						\boldsymbol O & \boldsymbol O & * & * \\
					\end{array}
					\right)-2(n-k)
					\\&\geq 2{\rm rank}\left(
					\begin{array}{ccc;{2pt/2pt}c}
						\boldsymbol I_{ik} & \boldsymbol O & \boldsymbol O & \boldsymbol O \\
						\boldsymbol O & \boldsymbol I_k & \boldsymbol O & \boldsymbol O \\
						\boldsymbol O & \boldsymbol O  & \boldsymbol C & \boldsymbol O \\
						\boldsymbol O & \boldsymbol O & \boldsymbol D & \boldsymbol O \\\hdashline[2pt/2pt]
						\boldsymbol O & \boldsymbol O & \boldsymbol O & \boldsymbol I_k \\
					\end{array}
					\right)-2(n-k)\\&=2(ik+k+n-(i+2)k+k)-2(n-k)\\&=2k.
			\end{align*}}
			\item$(2.5.2)$ For  any subspace $\mathcal U=\mathcal{F}(M(O)_{[n-k]})\in \mathcal C^{n-k}_i$, $\mathcal V=\mathcal{F}(M(G)_{[n-k]}) \in \mathcal C^{n-k}_{j}$ with $G\in \mathcal{D}_{j}$, we have
			{\begin{align*}
					d_S(\mathcal U,\mathcal V)&=2{\rm rank}\left(
					\begin{array}{cccc}
						\boldsymbol I_{ik+k} &   \boldsymbol O & \boldsymbol O \\
						\boldsymbol O &  \boldsymbol O& \boldsymbol I_{n-(i+2)k} \\
						\boldsymbol O & \boldsymbol *  & \boldsymbol O \\
						\hdashline[2pt/2pt]
						\boldsymbol I_{ik+k} & \boldsymbol O  & \boldsymbol O \\
						\boldsymbol O &\boldsymbol I_{k} & \boldsymbol * &  \\
						\boldsymbol O & \boldsymbol O &\boldsymbol *   \\
					\end{array}
					\right)-2(n-k)
                             \\&\geq 2{\rm rank}\left(
					\begin{array}{ccccc}
						\boldsymbol I_{ik+k} & \boldsymbol O & \boldsymbol O \\
						\boldsymbol O & \boldsymbol I_k & \boldsymbol O  \\
						\boldsymbol O & \boldsymbol O  & \boldsymbol I_{n-(i+2)k}
					\end{array}
					\right)-2(n-k)\\&=2(ik+2k+n-(i+3)k+k)-2(n-k)\\&=2k.
			\end{align*}}
		\end{itemize}

		{\bf Step 4.} For $\mathcal{A}\subset \{1,2,\ldots,k,n-k,\ldots,n-1\}$ with $\mathcal{A}\cap \{k,n-k\}\neq \emptyset$, deleting the $t$-projected $\mathcal{C}^t$ for all $t\not\in \mathcal{A}$ yields an $(n,\mathcal {A})$-ODFC of size $\frac{q^n-q^{k+r}}{q^k-1}+1$.  This completes the proof.  \qed
	\end{proof}

	\begin{corollary}\label{rek}
		Let $n,k$ be positive integers with $n\geq 2k$. Let $\mathcal{A}\subseteq\{1,2,\ldots,k,n-k,\ldots,n-1\}$.
		If $\mathcal A\cap \{k,n-k\}\neq\emptyset$. If
		$k>\begin{bmatrix}
			r \\
			1 \\
		\end{bmatrix}_q$ where $n\equiv r\pmod k$ and $0\leq r<k$, then
		\begin{align*}
			A_q(n,\mathcal A)=\frac{q^n-q^{k+r}}{q^k-1}+1.
		\end{align*}
	\end{corollary}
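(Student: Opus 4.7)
The plan is to observe that the corollary is essentially a matching upper-and-lower-bound argument; there is no new combinatorial work to do, since both directions are already in place earlier in the paper.

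First, I would invoke Theorem~\ref{conc-1} directly to obtain the lower bound
$$A_q(n,\mathcal A)\geq\frac{q^n-q^{k+r}}{q^k-1}+1,$$
which holds under exactly the hypotheses $\mathcal A\subseteq\{1,\ldots,k,n-k,\ldots,n-1\}$ and $\mathcal A\cap\{k,n-k\}\neq\emptyset$, with no restriction on $r$. The extra hypothesis $k>\begin{bmatrix}r\\1\end{bmatrix}_q$ in the corollary is not needed for the construction itself.

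Second, I would apply Theorem~\ref{flagupper} to get the matching upper bound. That theorem reduces, via Theorem~\ref{ODFC-Proj}, the question of bounding $A_q(n,\mathcal A)$ to bounding the size of a projected constant dimension code in $\mathcal G_q(n,k)$ (or equivalently $\mathcal G_q(n,n-k)$), and then appeals to the CDC bound of Theorem~\ref{cdc-upper}. Under the hypothesis $k>\begin{bmatrix}r\\1\end{bmatrix}_q=\frac{q^r-1}{q-1}$, the second clause of Theorem~\ref{flagupper} gives
$$A_q(n,\mathcal A)\leq\frac{q^n-q^{k+r}}{q^k-1}+1,$$
which exactly matches the lower bound. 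Equality then follows, and the proof is complete in two lines.

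There is no real obstacle: the heavy lifting is done by Theorem~\ref{conc-1} (the MRD-based construction) and by the CDC upper bound already imported into Theorem~\ref{flagupper}. The only thing worth a sentence of remark is the special case $r=0$, where $\begin{bmatrix}0\\1\end{bmatrix}_q=0$ so the hypothesis $k>\begin{bmatrix}r\\1\end{bmatrix}_q$ is automatic, and the formula $\frac{q^n-q^{k+r}}{q^k-1}+1$ simplifies to $\frac{q^n-1}{q^k-1}$, thereby recovering Theorem~\ref{dn} of Alonso-Gonz\'alez, Navarro-P\'erez and Soler-Escriv\`a as a special case.
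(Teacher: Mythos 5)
Your proposal is correct and coincides with the paper's own proof, which is the one-line "Apply Theorems \ref{flagupper} and \ref{conc-1}": the lower bound comes from the MRD construction of Theorem \ref{conc-1} and the matching upper bound from the second clause of Theorem \ref{flagupper} under the hypothesis $k>\frac{q^r-1}{q-1}$. Your added remark on the case $r=0$ recovering Theorem \ref{dn} is also accurate, though the paper places that observation after the corollary rather than inside its proof.
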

	
	\begin{proof} Apply Theorems \ref{flagupper} and \ref{conc-1}.
	\end{proof}
	
	Furthermore, set $k=\varepsilon n$ in Corollary \ref{rek}, where $0<\varepsilon<1$, we have
	\begin{align*}
		A_q(n,\mathcal A)=\frac{q^n-q^{\varepsilon n+r}}{q^{\varepsilon n}-1}+1.
	\end{align*}
	
	
	\begin{remark}
	Note that when $k$ is a divisor of $n$, Theorem \ref{conc-1} can also give optimal ODFCs, as described in Theorem \ref{dn}. However the construction in \cite{ans1} is invalid for $k\nmid n$. Our construction does not depend on this condition that $k \mid n$. When $n=2k+1$, since $k>\begin{bmatrix}
		1\\
		1 \\
	\end{bmatrix}_q=1$, Theorem \ref{conc-1} can also give the optimal ODFCs described in Theorem \ref{nd}.
	\end{remark}


\section{Almost optimum distance flag codes}

We gives a characterization of $(n,\{1,2,\ldots,k,n-k+1,\ldots,n-1\})$-AODFCs in the next theorem.

\begin{theorem}\label{AODFC-Proj}
Let $\mathcal{C}$ be a flag code of type $(1,2,\ldots,k,n-k+1,\ldots,n-1)$. The following statements are equivalent:

$(i)$ The code $\mathcal{C}$  is an almost optimum distance flag code.

$(ii)$ The code $\mathcal{C}$ is disjoint, for any $i\notin \{k,n-k+1\}$, every projected code $\mathcal{C}_{i}$ attains the maximum possible subspace distance, $d_{S}(\mathcal{C}_{k})\geq 2k-2$, $d_{S}(\mathcal{C}_{n-k+1})\geq 2k-4$, and does not exist $\mathcal{F},\mathcal{F}'\in \mathcal{C}$ such that $d_{S}(\mathcal{F}_{k},\mathcal{F}'_{k})=2k-2$ and $d_{S}(\mathcal{F}_{n-k+1},\mathcal{F}'_{n-k+1})=2k-4$. 
\end{theorem}

\begin{proof}
$(i) \Rightarrow (ii)$. Let $\mathcal{C}$ be an almost optimum distance flag code. Firstly, we prove $\mathcal{C}$ is disjoint. 

Assume that $\mathcal{C}$ is not disjoint. Then, there exists some index $j\in \{1,2,\ldots,k,n-k+1,\ldots,n-1\}$ with $|\mathcal{C}_{j}|<|\mathcal{C}|$ and at least two different flags $\mathcal{F},\mathcal{F}'\in \mathcal{C}$ such that $\mathcal{F}_{j}=\mathcal{F}'_{j}$. If $j\leq n-2$, then $d_{f}(\mathcal{F},\mathcal{F}')=\Sigma_{i\neq j,i\neq n-1}d_{S}(\mathcal{F}_{i},\mathcal{F}'_{i})+d_{S}(\mathcal{F}_{j},\mathcal{F}'_{j})+d_{S}(\mathcal{F}_{n-1},\mathcal{F}'_{n-1})=\Sigma_{i\neq j,i\neq n-1}d_{S}(\mathcal{F}_{i},\mathcal{F}'_{i})+0+2(n-1-j)\leq 2d_{max}-4$, which contradicts the minimum distance of $\mathcal{C}$. If $j=n-1$, by the assumption of $\mathcal{C}$, we have $d_{S}(\mathcal{F}_{n-2},\mathcal{F}'_{n-2})=4$, then $\dim(\mathcal{F}_{n-1}+\mathcal{F}'_{n-1})\geq \dim(\mathcal{F}_{n-2}+\mathcal{F}'_{n-2})=n$, which contradicts $\mathcal{F}_{n-1}=\mathcal{F}'_{n-1}$. 

Secondly, we prove for any $i\notin \{k,n-k+1\}$, every projected code $\mathcal{C}_{i}$ attains the maximum possible subspace distance. 

Assume that $\mathcal{C}_{j}$ does not attains the maximum possible subspace distance, where $j\notin \{k,n-k+1\}$. Then, there exists some flags $\mathcal{F}$ and $\mathcal{F}'$ such that  $d_{S}(\mathcal{F}_{j},\mathcal{F}'_{j})<\{2j,2(n-j)\}$. If $j\leq k-1$, then $d_{S}(\mathcal{F}_{j},\mathcal{F}'_{j})<2j$, it follows that $d_{f}(\mathcal{F},\mathcal{F}')=\Sigma_{i\neq j,i\neq k}d_{S}(\mathcal{F}_{i},\mathcal{F}'_{i})+d_{S}(\mathcal{F}_{j},\mathcal{F}'_{j})+d_{S}(\mathcal{F}_{k},\mathcal{F}'_{k})\leq 2d_{max}-4$, which contradicts the minimum distance of $\mathcal{C}$. If $j\geq n-k+2$, then $d_{S}(\mathcal{F}_{j},\mathcal{F}'_{j})<2(n-j)$,  it follows that $d_{S}(\mathcal{F}_{n-k+1},\mathcal{F}'_{n-k+1})<2k-2$ and $d_{f}(\mathcal{F},\mathcal{F}')=\Sigma_{i\neq j,i\neq n-k+1}d_{S}(\mathcal{F}_{i},\mathcal{F}'_{i})+d_{S}(\mathcal{F}_{j},\mathcal{F}'_{j})+d_{S}(\mathcal{F}_{n-k+1},\mathcal{F}'_{n-k+1})\leq 2d_{max}-4$, which contradicts the minimum distance of $\mathcal{C}$. 

Finally, we prove that $d_{S}(\mathcal{C}_{k})\geq 2k-2$, $d_{S}(\mathcal{C}_{n-k+1})\geq 2k-4$, and there does not exist $\mathcal{F},\mathcal{F}'\in \mathcal{C}$ such that $d_{S}(\mathcal{F}_{k},\mathcal{F}'_{k})=2k-2$, $d_{S}(\mathcal{F}_{n-k+1},\mathcal{F}'_{n-k+1})=2k-4$.

Assume on the contrary that there  exist $\mathcal{F},\mathcal{F}'\in \mathcal{C}$ such that $d_{S}(\mathcal{F}_{k},\mathcal{F}'_{k})\leq 2k-4$ or $d_{S}(\mathcal{F}_{n-k+1},\mathcal{F}'_{n-k+1})\leq 2k-6$. Then $d_{f}(\mathcal{F},\mathcal{F}')=\Sigma_{i\neq k}d_{S}(\mathcal{F}_{i},\mathcal{F}'_{i})+d_{S}(\mathcal{F}_{k},\mathcal{F}'_{k})\leq 2d_{max}-4$ or $d_{f}(\mathcal{F},\mathcal{F}')=\Sigma_{i\neq n-k+1}d_{S}(\mathcal{F}_{i},\mathcal{F}'_{i})+d_{S}(\mathcal{F}_{n-k+1},\mathcal{F}'_{n-k+1})\leq 2d_{max}-4$, which contradicts the minimum distance of $\mathcal{C}$. 

Assume on the contrary that there  exist $\mathcal{F},\mathcal{F}'\in \mathcal{C}$ such that $d_{S}(\mathcal{F}_{k},\mathcal{F}'_{k})=2k-2$, $d_{S}(\mathcal{F}_{n-k+1},\mathcal{F}'_{n-k+1})=2k-4$. Then $d_{f}(\mathcal{F},\mathcal{F}')=\Sigma_{i\neq k,i\neq n-k+1}d_{S}(\mathcal{F}_{i},\mathcal{F}'_{i})+d_{S}(\mathcal{F}_{k},\mathcal{F}'_{k})+d_{S}(\mathcal{F}_{n-k+1},\mathcal{F}'_{n-k+1})\leq 2d_{max}-4$, which contradicts the minimum distance of $\mathcal{C}$. 

$(ii) \Rightarrow(i)$. Assume that $(ii)$ is true. Since $\mathcal{C}$ is disjoint, given any pair of different flags $\mathcal{F}$ and $\mathcal{F}'$ in $\mathcal{C}$, by assumption, $d_{f}(\mathcal{F},\mathcal{F}')=\Sigma_{i\notin [k,n-k+1]}+d_{S}(\mathcal{F}_{k},\mathcal{F}'_{k})+d_{S}(\mathcal{F}_{n-k+1},\mathcal{F}'_{n-k+1})\geq 2d_{max}-2$. Hence, $d_{f}(\mathcal{C})\geq 2d_{max}-2$ and $\mathcal{C}$ is an almost optimum distance flag code.
\end{proof}

\begin{remark}\label{AODFCrek}
Condition $(2)$ in Theorem \ref{AODFC-Proj} can be simplified as: the code $\mathcal C$ is disjoint and  $d_S(\mathcal C_{k-1})\geq 2k-2, d_S(\mathcal C_k)\geq 2k-2, d_S(\mathcal C_{n-k+1})\geq 2k-4, d_S(\mathcal C_{n-k+2})\geq 2k-4,$ and does not exists $\mathcal{F},\mathcal{F}'\in \mathcal{C}$ such that $d_{S}(\mathcal{F}_{k},\mathcal{F}'_{k})=2k-2$, $d_{S}(\mathcal{F}_{n-k+1},\mathcal{F}'_{n-k+1})=2k-4$. 
\end{remark}

As a consequence, the $i$-projected codes of an almost optimum distance flag code have to be an $(n,2i,i)_q$-CDC if $i\leq k-1$
and an $(n,2n-i,i)_q$-CDC if $i\geq n-k+2$. By Theorem \ref{cdc-upper}, we have the following result.

\begin{theorem}\label{flagupper2}
Let $n,k$ be positive integers with $n\geq 2k$. Then 
\begin{center}
$A_q^{*}(n,\{1,2,\ldots,k,n-k+1,\ldots,n-1\})\leq \frac{q^n-q^r}{q^{k-1}-1}-\lfloor\frac{\sqrt{4q^{k-1}(q^{k-1}-q^r)+1}-(2q^{k-1}-2q^r+1)}{2}\rfloor-1$,
\end{center}
where $r \equiv n \pmod{(k-1)}$ 
and $0\leq r<k-1$.
If $k>\begin{bmatrix}
 r \\
 1 \\
\end{bmatrix}_q+1$, then $A_q^{*}(n,\{1,2,\ldots,k,n-k+1,\ldots,n-1\})\leq \frac{q^n-q^{k+r-1}}{q^{k-1}-1}+1$.
\end{theorem}

In the rest of this section, we use multilevel construction to construct $(n,\{1,2,\ldots,k,n\\-k+1,\ldots,n-1\})_{q}$-AODFCs by MRD codes.

\begin{theorem}\label{conc-2}
	Let $n,k$ be positive integers with $n\geq 2k$. then
	\begin{align*}
		A^{*}_q(n,\{1,2,\ldots,k,n-k+1,\ldots,n-1\})\geq\frac{q^n-q^{k+r-1}}{q^{k-1}-1}+1,
	\end{align*}
	where $r \equiv n \pmod{(k-1)}$ 
and $0\leq r<k-1$.
\end{theorem}

\begin{proof} Set $n=(a+1)(k-1)+r$.
	
{\bf Step 1.} We first construct the set $\mathcal{C}$ of sequences of subspaces as follows.

Let $\mathcal{D}_{a-1}$ be an $[(n-k+1)\times(n-k+1),n-k+1]_q$-MRD code, which exists by Theorem \ref{MRD}. For each nonzero matrix $D=\left(
\begin{array}{c}
	\boldsymbol A \\
	\boldsymbol B \\
\end{array}
\right)\in\mathcal D_{a-1}$ with rank$(\boldsymbol A)=k-1$, define an invertible 
matrix of order $n$ as follows:
\begin{align*}
	M(D)=\left(
	\begin{array}{cc}
		\boldsymbol A & \boldsymbol I_{k-1} \\
		\boldsymbol B & \boldsymbol O \\
		\boldsymbol A & \boldsymbol O \\
	\end{array}
	\right).
\end{align*}
If $a>1$, let $\mathcal{D}_i$ be an  $[(n-(i+2)(k-1))\times (n-(i+2)(k-1)),n-(i+2)(k-1)]_q$-MRD code for $0\leq i\leq a-2$,  which exists by Theorem \ref{MRD}. For each nonzero matrix $G_i=\left(\begin{array}{c}
	\boldsymbol C \\
	\boldsymbol D \\
\end{array}
\right)\in \mathcal{D}_i$ with rank$(\boldsymbol C)=k-1$, define
\begin{align*}
	M(G_i)=
	\left(
	\begin{array}{cccc}
		\boldsymbol O & \boldsymbol I_{k-1} & \boldsymbol C & \boldsymbol O \\
		\boldsymbol O & \boldsymbol O & \boldsymbol D & \boldsymbol O \\
		\boldsymbol I_{i(k-1)} & \boldsymbol O & \boldsymbol O & \boldsymbol O \\	
		\boldsymbol O & \boldsymbol O & \boldsymbol O & \boldsymbol I_{k-1} \\
		\boldsymbol O & \boldsymbol O & \boldsymbol C & \boldsymbol O \\
	\end{array}
	\right).
\end{align*}
For the zero matrix $G_i=O\in \mathcal{D}_i$, define
\begin{align*}
	M(O)=
	\left( 
	\begin{array}{ccccc}
		\boldsymbol O & \boldsymbol I_{k-1} & \boldsymbol O  & \boldsymbol O& \boldsymbol O \\
		\boldsymbol O & \boldsymbol O & \boldsymbol O & \boldsymbol I_{n-(i+3)(k-1)} & \boldsymbol O \\
		\boldsymbol I_{i(k-1)} & \boldsymbol O & \boldsymbol O  & \boldsymbol O & \boldsymbol O \\
		\boldsymbol O & \boldsymbol O &  \boldsymbol O &\boldsymbol O & \boldsymbol I_{k-1} \\
		\boldsymbol O & \boldsymbol O &  \boldsymbol I_{k-1} & \boldsymbol O & \boldsymbol O \\
	\end{array}
	\right),
\end{align*}
Define two invertible matrices as follows:
\begin{align*}
	M(a)=\left(
	\begin{array}{ccc}
 	   \boldsymbol O & \boldsymbol O  & \boldsymbol I_{k-1} \\
 	  		\boldsymbol O &\boldsymbol I_{n-2(k-1)} & \boldsymbol O \\ 
		\boldsymbol I_{k-1} &\boldsymbol O & \boldsymbol O \\
	\end{array}
	\right),
\end{align*} 
\begin{align*}
	M(a+1)=\left(
	\begin{array}{ccc}
		\boldsymbol O  & \boldsymbol I_{k-1} & \boldsymbol O \\
		\hat{{\boldsymbol I}}_{n-2(k-1)} & \boldsymbol O & \boldsymbol O \\
	   \boldsymbol O &\boldsymbol O & \boldsymbol I_{k-1} \\
	\end{array}
	\right).
\end{align*}

Let 
$$\mathcal{M}=\{M(D)\colon D\in \mathcal{D}_{a-1}, D\neq O\}
 \cup\{M(G_i)\colon G_i\in \mathcal{D}_i,0\leq i\leq a-2\}
 \cup \{M(a),M(a+1)\}.$$

It is easy to see that each matrix of $\mathcal{M}$ is an invertible $n\times n$ matrix. Note that $\mathcal{M}=\{M(D)\colon D\in \mathcal{D}_{a-1}, D\neq O\}\cup \{M(a),M(a+1)\}$ if $a=1$. 

For each matrix $M\in \mathcal{M}$ and  $1\leq j\leq n$, let $\mathcal{F}(M_{[j]})=\text{rowspace}(M_{[j]})$,

$$\mathcal{F}(M)=(\mathcal{F}(M_{[1]}),\ldots,\mathcal{F}(M_{[k]}),
 \mathcal{F}(M_{[n-k+1]}),\ldots,\mathcal{F}(M_{[n-1]})),$$
where $M_{[j]}$ stands for  the top $j \times n$ submatrix of
$M$.

We claim that $\mathcal{C}=\{\mathcal{F}(M)\colon M\in \mathcal{M}\}$ is the desired flag code. 

{\bf Step 2.} We show that $\mathcal{C}$ is the set of flags of type $(1,2,\ldots,k,n-k+1,\ldots,n-1)$ and $|\mathcal{C}|=\frac{q^n-q^{k+r-1}}{q^{k-1}-1}+1$.

Since each matrix $M$ of $\mathcal{M}$ is an invertible $n\times n$ matrix, we have that $\dim (\mathcal{F}(M_{[j]}))=j$ for $j\in \{1,\ldots,k,n-k+1,\ldots,n-1\}$. Clearly, $\{0\}\subsetneq \mathcal{F}(M_{[1]})\subsetneq \mathcal{F}(M_{[2]})\subsetneq \cdots \subsetneq \mathcal{F}(M_{[k]}) \subsetneq  \mathcal{F}(M_{[n-k+1]}) \subsetneq \mathcal{F}(M_{[n-k+2]}) \subsetneq \cdots \subsetneq \mathcal{F}(M_{[n-1]})\subsetneq \mathbb{F}_q^n$. Therefore, $\mathcal{F}(M)$ is a flag of type $(1,\ldots,k,n-k+1,\ldots,n-1)$.
 
Clearly, $|\mathcal{C}|=|\mathcal{M}|$. Since $\mathcal D_{a-1}$ is an $[(n-k+1)\times(n-k+1),n-k+1]_q$-MRD code and each $\mathcal{D}_i$ is an $[(n-(i+2)(k-1))\times(n-(i+2)(k-1)),(n-(i+2)(k-1)]_q$-MRD code, by the construction of $\mathcal{M}$ we have 
\begin{align*}
	|\mathcal C|&=\sum\limits_{i=0}^{a-1}|\mathcal D_{i}|+1=q^{a(k-1)+r}+ \sum\limits_{i=1}^{a-1}{q^{i(k-1)+r}}+1\\ &=\sum\limits_{i=1}^{a}{q^{i(k-1)+r}}+1 =\frac{q^n-q^{k+r-1}}{q^{k-1}-1}+1.   
\end{align*}

{\bf Setp 3.} We prove that $\mathcal C$ is an almost optimum distance flag code. By Theorem \ref{AODFC-Proj} and Remark \ref{AODFCrek}, it suffices to show that subspace distance of the $t$-projected $\mathcal C^t$ of $\mathcal C$ are $2k-2$, where $t\in \{k-1,k,n-k+1\}$, $\mathcal C^t=\bigcup\limits_{i=0}^{a+1}\mathcal C_i^t$, $\mathcal{C}_{a-1}^t=\{\mathcal{F}(M(D)_{[t]})\colon D\in \mathcal{D}_{a-1}, D\neq \boldsymbol O\}$, $\mathcal{C}_i^t=\{\mathcal{F}(M(G)_{[t]})\colon G\in \mathcal{D}_i\}$, $\mathcal{C}_{a}^t=\{\mathcal{F}(M(a)_{[t]})\}$, and $\mathcal{C}_{a+1}^t=\{\mathcal{F}(M(a+1)_{[t]})\}$.
We distinguish three cases:

 \begin{itemize}
	\item[$(1)$] $t=k-1$.
	
	\item$(1.1)$ We shall prove that $d_S(\mathcal{C}_i^{k-1})\geq 2k-2$ for $0\leq i\leq a-1$. 
	
	For $0\leq i\leq a-2$, since $\mathcal{D}_{i}$ is an 
	$[(n-(i+2)(k-1))\times (n-(i+2)(k-1)),(n-(i+2)(k-1))]_q$-MRD code,
	$\{G_{[k-1]}\colon G\in \mathcal{D}_{i}\}$ is a $[(k-1)\times (n-(i+2)(k-1)),k-1]_q$-MRD code. For any two distinct subspaces $\mathcal U=\text{rowspace}(O_{(k-1)\times i(k-1)},I_{k-1}, G_{[k-1]},O)$, $\mathcal V=\text{rowspace}(O_{(k-1)\times i(k-1)},I_{k-1}, G'_{[k-1]},O) \in \mathcal C_{i}^{k-1}$, by Lemma \ref{lem:efc-1} we have $d_S(\mathcal U,\mathcal V)=2d_R(G_{[k-1]},G'_{[k-1]})= 2k-2$. 
	
Since $\mathcal{D}_{a-1}$ is an 
		$[(n-k+1)\times (n-k+1),n-k+1]_q$-MRD code,
		$\{D_{[k]}\colon D\in \mathcal{D}_{a-1}\}$ is a $[(k-1)\times (n-k+1),k-1]_q$-MRD code. For any two distinct subspaces $\mathcal U=\text{rowspace}(D_{[k-1]},I_{k-1},O)$, $\mathcal V=\text{rowspace}(D'_{[k-1]},I_{k-1},O) \in \mathcal C_{a-1}^{k-1}$, by Lemma \ref{lem:iefc-1} we have $d_S(\mathcal U,\mathcal V)=2d_R(D_{[k-1]},D'_{[k-1]})= 2k-2$. 
	
	\item$(1.2)$  We shall prove that $d_S(\mathcal U,\mathcal V)\geq 2k-2$ for any two subspaces $\mathcal U\in \mathcal C^{k-1}_{a}, \mathcal V \in \mathcal C_i^{k-1}$, where $0\leq i\leq a-1$. 
	
	Since the indentifying vector of $\mathcal U$ is  $(\underbrace{0\cdots0}_{n-k+1}\underbrace{1\cdots1}_{k-1})$ and the indentifying vector of $\mathcal V$ is $(\underbrace{0\cdots0}_{i(k-1)}\underbrace{1\cdots1}_{k-1}\underbrace{0\cdots0}_{n-(i+1)(k-1)})$ if $i<a-1$ or $(\underbrace{\alpha }_{n-k+1}\underbrace{0\cdots0}_{k-1})$ with $wt(\alpha)=k-1$ if $i=a-1$,  we have that $d_S(\mathcal U,\mathcal V)\geq 2k-2$ by Lemma
	\ref{lem:efc-2}.
	
	\item$(1.3)$  We shall prove that $d_S(\mathcal U,\mathcal V)\geq 2k-2$ for any two subspaces $\mathcal U\in \mathcal C^{k-1}_{a+1}, \mathcal V \in \mathcal C_i^{k-1}$, where $0\leq i\leq a$.
	
	Since both the identifying vector  and inverse identifying vector of $\mathcal U$ are  $(\underbrace{0\cdots0}_{n-2(k-1)}\\\underbrace{1\cdots1}_{k-1}\underbrace{0\cdots0}_{k-1})$, the identifying vector of $\mathcal V$ is $(\underbrace{0\cdots0}_{i(k-1)}\underbrace{1\cdots1}_{k-1}\underbrace{0\cdots0}_{n-(i+1)(k-1)})$ when $i\leq a-2$, the inverse identifying vector of $\mathcal V$ is $(\underbrace{0\cdots0}_{n-k+1}\underbrace{1\cdots1}_{k-1})$ when $i= a-1$, and both the identifying vector and  inverse identifying vector of $\mathcal V$ are $(\underbrace{0\cdots0}_{n-k+1}\underbrace{1\cdots1}_{k-1})$ when $i= a$, we have that $d_S(\mathcal U,\mathcal V)\geq 2k-2$ by Lemmas    \ref{lem:efc-2} and  \ref{lem:iefc}.
	
	\item$(1.4)$  We shall prove that $d_S(\mathcal U,\mathcal V)\geq 2k-2$ for any two subspaces $\mathcal U\in \mathcal C^{k-1}_{i}, \mathcal V \in \mathcal C_j^{k-1}$, where $0\leq i< j\leq a-2$.
	
	Since the indentifying vectors of $\mathcal U$ and $\mathcal V$ are $(\underbrace{0\cdots0}_{i(k-1)}\underbrace{1\cdots1}_{k-1}\underbrace{0\cdots0}_{n-(i+1)(k-1)})$ and $(\underbrace{0\cdots0}_{j(k-1)}\\\underbrace{1\cdots1}_{k-1}\underbrace{0\cdots0}_{n-(j+1)(k-1)})$ respectively,
	we have that $d_S(\mathcal U,\mathcal V)\geq 2k-2$ by Lemma
	\ref{lem:efc-2}.
	\item$(1.5)$  We shall prove that $d_S(\mathcal U,\mathcal V)\geq 2k-2$ for any two subspaces $\mathcal U\in \mathcal C^{k-1}_{a-1}, \mathcal V \in \mathcal C^{k-1}_{i}$, where $0\leq i\leq a-2$. 
	
	Since the inverse indentifying vectors of $\mathcal U$ and $\mathcal V$ are $(\underbrace{0\cdots0}_{n-k+1}\underbrace{1\cdots1}_{k-1})$ and $(\underbrace{\alpha}_{n-k+1}\underbrace{0\cdots0}_{k-1})$ with $wt(\alpha)=k-1$ respectively,
we have that $d_S(\mathcal U,\mathcal V)\geq 2k-2$ by Lemma
\ref{lem:iefc}.

\item[$(2)$]  $t=k$.
	
	\item$(2.1)$  We shall prove that $d_S(\mathcal{C}_i^{k})\geq 2k-2$ for $0\leq i\leq a-1$.
	
	\item $(2.1.1)$ For any two distinct subspaces $\mathcal U, \mathcal V \in \mathcal C_{a-1}^{k}$, we have 
	\begin{align*}
		d_S(\mathcal U,\mathcal V)&=2{\rm rank}\left(
		\begin{array}{ccccc}
			\boldsymbol A & \boldsymbol I_k \\
			\boldsymbol B_{[1]} & \boldsymbol O\\
			\boldsymbol A' & \boldsymbol I_k \\
			\boldsymbol B'_{[1]} & \boldsymbol O\\
		\end{array}
		\right)-2k\\
		&=2{\rm rank}\left(
		\begin{array}{ccccc}
			\boldsymbol A-\boldsymbol A' & \boldsymbol O \\
			\boldsymbol B_{[1]}-\boldsymbol B'_{[1]} & \boldsymbol O\\
			\boldsymbol A' & \boldsymbol I_k \\
		\end{array}
		\right)-2k\\&=2(k-1+k-k)=2k-2,
	\end{align*}
	where the last equality holds because $\left(
	\begin{array}{c}
		\boldsymbol A \\
		\boldsymbol B \\
	\end{array}
	\right),\left (\begin{array}{c}
		\boldsymbol A' \\
		\boldsymbol B' \\
	\end{array}
	\right)$ are two distinct nonzero codewords of $[(n-k+1)\times (n-k+1),n-k+1]_q$-MRD code $\mathcal D$.

	\item$(2.1.2)$ For any two distinct subspaces $\mathcal U=\mathcal{F}(M(G)_{[k]}), \mathcal V=\mathcal{F}(M(G')_{[k]}) \in \mathcal C_i^{k}$ with $G\neq O, G'\neq O$ and $0\leq i<a-1$, similar to (2.1.1), it is easy to verify that $d_S(\mathcal U, \mathcal V )\geq 2k-2$.
	\item$(2.1.3)$ For any subspace $\mathcal U=\mathcal{F}(M(G)_{[k]}), \mathcal V=\mathcal{F}(M(O)_{[k]}) \in \mathcal C_i^{k}$ with $G\neq O$ and $0\leq i<a-1$, we have
	{\begin{align*}
			d_S(\mathcal U,\mathcal V)&=2{\rm rank}\left(
			\begin{array}{cccccc}
				\boldsymbol O& \boldsymbol I_{k-1} & \boldsymbol C & \boldsymbol O & \boldsymbol O\\
				\boldsymbol O &\boldsymbol O& \boldsymbol D_{[1]} & \boldsymbol O & \boldsymbol O\\
				\hdashline[2pt/2pt]
				\boldsymbol O& \boldsymbol I_{k-1} & \boldsymbol O & \boldsymbol O& \boldsymbol O \\
			\end{array}
			\right)-2k\\&=2{\rm rank}\left(
			\begin{array}{cccccc}
			\boldsymbol O& \boldsymbol I_{k-1} & \boldsymbol C & \boldsymbol O & \boldsymbol O\\
				\boldsymbol O &\boldsymbol O& \boldsymbol D_{[1]} & \boldsymbol O & \boldsymbol O\\
				\hdashline[2pt/2pt]
				\boldsymbol O& \boldsymbol O & {\boldsymbol -C} & \boldsymbol O& \boldsymbol O \\
			\end{array}
			\right)-2k\\&=2(k-1+k-k)=2k-2.
	\end{align*}}

	\item$(2.2)$ We shall prove that $d_S(\mathcal U,\mathcal V)\geq 2k-2$
	for any two subspaces $\mathcal U\in \mathcal C^{k}_{a}, \mathcal V \in \mathcal C_i^{k}$, where $0\leq i\leq a-1$.
	
	Since the indentifying vector of $\mathcal U$ is  $(\underbrace{0\cdots0}_{k-1}1\underbrace{0\cdots0}_{n-2k+1}\underbrace{1\cdots1}_{k-1})$, the indentifying vector of $\mathcal V$ is of the form $(\underbrace{0\cdots0}_{i(k-1)}\underbrace{1\cdots1}_{k-1}\underbrace{\alpha_2}_{n-(i+2)(k-1)}\underbrace{0\cdots0}_{k-1})$ for some $\alpha_2\in \mathbb F_2^{n-(i+2)(k-1)}$ of Hamming weight $1$ when $i\leq a-2$, and the indentifying vector of $\mathcal V$ is of the form $(\underbrace{\alpha_{2}}_{n-k+1}\underbrace{0\cdots0}_{k-1})$  for some $\alpha_2\in \mathbb F_2^{n-k+1}$ of Hamming weight $k$ when $i=a-1$, we have that $d_S(\mathcal U,\mathcal V)\geq 2k-2$ by Lemma
	\ref{lem:efc-2}.
	
	\item$(2.3)$ We shall prove that $d_S(\mathcal U,\mathcal V)\geq 2k-2$ for any two subspaces $\mathcal U\in \mathcal C^{k}_{a+1}, \mathcal V \in \mathcal C_i^{k}$, where $0\leq i\leq a$.
	
	Since the identifying vector and inverse indentifying vector of $\mathcal U$ are  $(\underbrace{0\cdots0}_{n-2k+1}\underbrace{1\cdots1}_{k}\\\underbrace{0\cdots0}_{k-1})$, the  indentifying vector of $\mathcal V$ is of the form $(\underbrace{0\cdots0}_{i(k-1)}\underbrace{1\cdots1}_{k-1}\underbrace{\alpha_{2}}_{n-(i+2)(k-1)}\underbrace{1\cdots1}_{k-1})$ for some $\alpha_2\in \mathbb F_2^{n-(i+2)(k-1)}$ of Hamming weight $1$ when $i\leq a-2$,   the inverse  indentifying vector of $\mathcal V$ is of the form $(\underbrace{\alpha_{2}}_{n-k+1}\underbrace{1\cdots1}_{k-1})$ for some $\alpha_2\in \mathbb F_2^{n-k+1}$ of Hamming weight $1$ when $i=a-1$, the indentifying vector of $\mathcal V$ is $(\underbrace{0\cdots0}_{k-1}1\underbrace{0\cdots0}_{n-2k+1}\underbrace{1\cdots1}_{k-1})$ when $i=a$, we have that $d_S(\mathcal U,\mathcal V)\geq 2k-2$ by Lemma
	\ref{lem:iefc}.
	\item$(2.4)$ We shall prove that $d_S(\mathcal U,\mathcal V)\geq 2k-2$ for any two subspaces $\mathcal U\in \mathcal C^{k}_{a-1}$, $\mathcal V \in \mathcal C^{k}_{i}$, where $0\leq i\leq a-2$.
	
Since the inverse indentifying vector of $\mathcal U$ is of the form  $(\underbrace{\alpha}_{n-k+1}\underbrace{1\cdots1}_{k-1})$ for some $\alpha\in \mathbb F_2^{n-k+1}$ of Hamming weight $1$, the  indentifying vector of $\mathcal V$ is of the form $(\underbrace{0\cdots0}_{(i+1)(k-1)}\underbrace{\alpha_{2}}_{n-(i+2)(k-1)}\underbrace{0\cdots0}_{k-1})$ for some $\alpha_2\in \mathbb F_2^{n-(i+2)(k-1)}$ of Hamming weight $1$, we have that $d_S(\mathcal U,\mathcal V)\geq 2k-2$ by Lemma
	\ref{lem:iefc}.
	
	\item$(2.5)$ We shall prove that $d_S(\mathcal U,\mathcal V)\geq 2k-2$ for any two subspaces $\mathcal U\in \mathcal C^{n-k}_i$, $\mathcal V \in \mathcal C^{n-k}_{j}$, where $0\leq i<j\leq a-2$.

	Since the indentifying vector of $\mathcal U$ is of the form  $(\underbrace{0\cdots0}_{i(k-1)}\underbrace{1\cdots1}_{k-1}\underbrace{\alpha}_{n-(i+2)(k-1)}\underbrace{1\cdots1}_{k-1})$ for some $\alpha\in \mathbb F_2^{n-(i+2)(k-1)}$ of Hamming weight $1$, the  indentifying vector of $\mathcal V$ is of the form $(\underbrace{0\cdots0}_{j(k-1)}\underbrace{1\cdots1}_{k-1}\underbrace{\beta}_{n-(j+2)(k-1)}\underbrace{0\cdots0}_{k-1})$ for some $\beta\in \mathbb F_2^{n-(j+2)(k-1)}$ of Hamming weight $1$, we have that $d_S(\mathcal U,\mathcal V)\geq 2k-2$ by Lemma 
	\ref{lem:efc-2}.

	\item[$(3)$]  $t=n-k+1$.
	
	\item$(3.1)$  We shall prove that $d_S(\mathcal{C}_i^t)\geq 2k-2$ for $0\leq i\leq a-1$.
	
	\item $(3.1.1)$ For any two distinct subspaces $\mathcal U, \mathcal V \in \mathcal C_{a-1}^{n-k+1}$, we have 
	\begin{align*}
		d_S(\mathcal U,\mathcal V)&=2{\rm rank}\left(
		\begin{array}{ccccc}
			\boldsymbol A & \boldsymbol I_{k-1} \\
			\boldsymbol B & \boldsymbol O\\
			\boldsymbol A' & \boldsymbol I_{k-1} \\
			\boldsymbol B' & \boldsymbol O\\
		\end{array}
		\right)-2(n-k+1)\\
		&=2{\rm rank}\left(
		\begin{array}{ccccc}
			\boldsymbol A-\boldsymbol A' & \boldsymbol O \\
			\boldsymbol B-\boldsymbol B' & \boldsymbol O\\
			\boldsymbol A' & \boldsymbol I_{k-1} \\
		\end{array}
		\right)-2(n-k+1)\\&=2(n-k+1+k-1-n+k-1)=2k-2,
	\end{align*}
	where the last equality holds because $\left(
	\begin{array}{c}
		\boldsymbol A \\
		\boldsymbol B \\
	\end{array}
	\right),\left (\begin{array}{c}
		\boldsymbol A' \\
		\boldsymbol B' \\
	\end{array}
	\right)$ are two distinct nonzero codewords of $[(n-k+1)\times (n-k+1),n-k+1]_q$-MRD code $\mathcal D$.

	\item$(3.1.2)$ For any two distinct subspaces $\mathcal U=\mathcal{F}(M(G)_{[n-k+1]}), \mathcal V=\mathcal{F}(M(G')_{[n-k+1]}) \in \mathcal C_i^{n-k+1}$ with $G\neq O, G'\neq O$ and $0\leq i<a-1$, similar to (2.1.1), it is easy to verify that $d_S(\mathcal U, \mathcal V )\geq 2k-2$.
	\item$(3.1.3)$ For any subspace $\mathcal U=\mathcal{F}(M(G)_{[n-k+1]}), \mathcal V=\mathcal{F}(M(O)_{[n-k+1]}) \in \mathcal C_i^{n-k+1}$ with $G\neq O$ and $0\leq i<a-1$, we have
	{\begin{align*}
			d_S(\mathcal U,\mathcal V)&=2{\rm rank}\left(
			\begin{array}{cccccc}
				\boldsymbol O&\boldsymbol I_{k-1} & \boldsymbol C & \boldsymbol O  \\
				\boldsymbol O & \boldsymbol O& \boldsymbol D & \boldsymbol O \\
				\boldsymbol I_{i(k-1)} & \boldsymbol O & \boldsymbol O & \boldsymbol O \\
				\boldsymbol O & \boldsymbol O & \boldsymbol O & \boldsymbol I_{k-1} \\
				\hdashline[2pt/2pt]
                          \boldsymbol O & \boxed{\boldsymbol I_{k-1}} & \boldsymbol O & \boldsymbol O \\
				\boldsymbol I_{i(k-1)} & \boldsymbol O & \boldsymbol O & \boldsymbol O \\
				\boldsymbol O & \boldsymbol O & * & * \\
			\end{array}
			\right)-2(n-k+1)\\&=2{\rm rank}\left(
			\begin{array}{cccc}
\boldsymbol O&\boldsymbol O & \boldsymbol C & \boldsymbol O  \\
				\boldsymbol O & \boldsymbol O& \boldsymbol D & \boldsymbol O \\
				\boldsymbol I_{i(k-1)} & \boldsymbol O & \boldsymbol O & \boldsymbol O \\
				\boldsymbol O & \boldsymbol O & \boldsymbol O & \boldsymbol I_{k-1} \\
				\hdashline[2pt/2pt]
                          \boldsymbol O &\boldsymbol I_{k-1}& \boldsymbol O & \boldsymbol O \\
			\end{array}
			\right)-2(n-k+1)
                  \\&\geq 2{\rm rank}\left(
			\begin{array}{ccc;{2pt/2pt}c}
				\boldsymbol I_{i(k-1)} & \boldsymbol O & \boldsymbol O & \boldsymbol O \\
				\boldsymbol O & \boldsymbol I_{k-1} & \boldsymbol O & \boldsymbol O \\
				\boldsymbol O & \boldsymbol O  & \boldsymbol C & \boldsymbol O \\
				\boldsymbol O & \boldsymbol O & \boldsymbol D & \boldsymbol O \\\hdashline[2pt/2pt]
				\boldsymbol O & \boldsymbol O & \boldsymbol O & \boldsymbol I_{k-1} \\
			\end{array}
			\right)-2(n-k+1)\\&=2(i(k-1)+k-1+n-(i+2)(k-1)+k-1)-2(n-k+1)\\&=2k-2,
	\end{align*}}
where $*$ stands for a submatrix.

	\item$(3.2)$ We shall prove that $d_S(\mathcal U,\mathcal V)\geq 2k$
	for any two subspaces $\mathcal U\in \mathcal C^{n-k+1}_{a}, \mathcal V \in \mathcal C_i^{n-k+1}$, where $0\leq i\leq a-1$.
	
	Since the indentifying vector of $\mathcal U$ is  $(\underbrace{0\cdots0}_{k-1}\underbrace{1\cdots1}_{n-k+1})$, the indentifying vector of $\mathcal V$ is of the form $(\underbrace{1\cdots1}_{(i+1)(k-1)}\underbrace{\alpha_2}_{n-(i+2)(k-1)}\underbrace{1\cdots1}_{k-1})$ for some $\alpha_2\in \mathbb F_2^{n-(i+2)(k-1)}$ of Hamming weight $n-(i+3)(k-1)$ when $i\leq a-2$, and the indentifying vector of $\mathcal V$ is of the form $(\underbrace{1\cdots1}_{n-k+1}\underbrace{0\cdots0}_{k-1})$ when $i=a-1$, we have that $d_S(\mathcal U,\mathcal V)\geq 2k-2$ by Lemma
	\ref{lem:efc-2}.
	
	\item$(3.3)$ We shall prove that $d_S(\mathcal U,\mathcal V)\geq 2k-2$ for any two subspaces $\mathcal U\in \mathcal C^{n-k+1}_{a+1}, \mathcal V \in \mathcal C_i^{n-k+1}$, where $0\leq i\leq a$.
	
	Since the inverse indentifying vector of $\mathcal U$ is  $(\underbrace{1\cdots1}_{n-k+1}\underbrace{0\cdots0}_{k-1})$ and the inverse indentifying vector of $\mathcal V$ is of the form $(\underbrace{\alpha}_{n-k+1}\underbrace{1\cdots1}_{k-1})$ for some $\alpha\in \mathbb F_2^{n-k+1}$ of Hamming weight $n-2(k-1)$,  we have that $d_S(\mathcal U,\mathcal V)\geq 2k-2$ by Lemma
	\ref{lem:iefc}.

	\item$(3.4)$ We shall prove that $d_S(\mathcal U,\mathcal V)\geq 2k-2$ for any two subspaces $\mathcal U\in \mathcal C^{n-k+1}_{a-1}$, $\mathcal V \in \mathcal C^{n-k}_{i}$, where $0\leq i\leq a-2$.
	
	\item$(3.4.1)$  Let $\mathcal U=\mathcal{F}(M(D)_{[n-k+1]})\in \mathcal C^{n-k+1}_{a-1}, \mathcal V=\mathcal{F}(M(G)_{[n-k+1]}) \in \mathcal C^{n-k+1}_{i}$ with $D$ and $G$ nonzero matrices. We have
{ \small\begin{align*}
			d_S(\mathcal U,\mathcal V)&=2{\rm rank}\left(
			\begin{array}{c;{2pt/2pt}c}
				\boldsymbol A & \boxed{\boldsymbol I_{k-1}} \\
				\boldsymbol B& \boldsymbol O \\
				\hdashline[2pt/2pt]
				\begin{array}{ccc}
					\boldsymbol I_{i(k-1)} & \boldsymbol O & \boldsymbol O  \\
					\boldsymbol O & \boldsymbol I_{k-1} & \boldsymbol C  \\
					\boldsymbol O & \boldsymbol O &\boldsymbol  D\\
					\boldsymbol O & \boldsymbol O & \boldsymbol O  \\
				\end{array}
				& \begin{array}{c}
					\boldsymbol O \\
					\boldsymbol O \\
					\boldsymbol O \\
					\boxed{\boldsymbol I_{k-1}} \\
				\end{array}
				\\
			\end{array}
			\right)-2(n-k+1)
\\&=2{\rm rank}\left(
			\begin{array}{c;{2pt/2pt}c}
				\boldsymbol A & \boldsymbol O \\
				\boldsymbol B& \boldsymbol O \\
				\hdashline[2pt/2pt]
				\begin{array}{ccc}
					\boldsymbol I_{i(k-1)} & \boldsymbol O & \boldsymbol O  \\
					\boldsymbol O & \boldsymbol I_{k-1} & \boldsymbol C  \\
					\boldsymbol O & \boldsymbol O &\boldsymbol  D\\
					\boldsymbol O & \boldsymbol O & \boldsymbol O  \\
				\end{array}
				& \begin{array}{c}
					\boldsymbol O \\
					\boldsymbol O \\
					\boldsymbol O \\
					\boxed{\boldsymbol I_{k-1}} \\
				\end{array}
				\\
			\end{array}
			\right)-2(n-k+1)\\
&\geq 2[{\rm rank}\left(
			\begin{array}{c}
				\boldsymbol A  \\
				\boldsymbol B  \\
			\end{array}
			\right)+k-1]-2(n-k+1)\\&=2k-2.
	\end{align*}}
	
	\item$(3.4.2)$  Let $\mathcal U=\mathcal{F}(M(D)_{[n-k+1]})\in \mathcal C^{n-k+1}_{a-1}, \mathcal V=\mathcal{F}(M(O)_{[n-k+1]}) \in \mathcal C^{n-k+1}_{i}$ with  $D$ a nonzero matrix. We have
	{ \small\begin{align*}
			d_S(\mathcal U,\mathcal V)&=2{\rm rank}\left(
			\begin{array}{c;{2pt/2pt}c}
				\boldsymbol A & \boxed{\boldsymbol I_{k-1}} \\
				\boldsymbol B& \boldsymbol O \\
				\hdashline[2pt/2pt]
				\begin{array}{ccc}
					\boldsymbol I_{i(k-1)} & \boldsymbol O & \boldsymbol O  \\
					\boldsymbol O & \boldsymbol I_{k-1} & \boldsymbol O  \\
					\boldsymbol O & \boldsymbol O & *\\
					\boldsymbol O & \boldsymbol O & \boldsymbol O  \\
				\end{array}
				& \begin{array}{c}
					\boldsymbol O \\
					\boldsymbol O \\
					\boldsymbol O \\
					\boxed{\boldsymbol I_{k-1}} \\
				\end{array}
				\\
			\end{array}
			\right)-2(n-k+1)\\&=2{\rm rank}\left(
			\begin{array}{c;{2pt/2pt}c}
				\boldsymbol A & \boldsymbol O \\
				\boldsymbol B& \boldsymbol O \\
				\hdashline[2pt/2pt]
				\begin{array}{ccc}
					\boldsymbol I_{i(k-1)} & \boldsymbol O & \boldsymbol O  \\
					\boldsymbol O & \boldsymbol I_{k-1} & \boldsymbol O  \\
					\boldsymbol O & \boldsymbol O & * \\
					\boldsymbol O & \boldsymbol O & \boldsymbol O  \\
				\end{array}
				& \begin{array}{c}
					\boldsymbol O \\
					\boldsymbol O \\
					\boldsymbol O \\
					\boxed{\boldsymbol I_{k-1}} \\
				\end{array}
				\\
			\end{array}
			\right)-2(n-k+1)\\&
			\geq 2[{\rm rank}\left(
			\begin{array}{c}
				\boldsymbol A  \\
				\boldsymbol B  \\
			\end{array}
			\right)+k-1]-2(n-k+1)\\&=2k-2.
	\end{align*}}
	
	\item$(3.5)$ We shall prove that $d_S(\mathcal U,\mathcal V)\geq 2k-2$ for any two subspaces $\mathcal U\in \mathcal C^{n-k}_i$, $\mathcal V \in \mathcal C^{n-k}_{j}$, where $0\leq i<j\leq a-2$.
	
	\item$(3.5.1)$
	For any two subspaces $\mathcal U=\mathcal{F}(M(G)_{[n-k+1]})\in \mathcal C^{n-k+1}_{i}, \mathcal V=\mathcal{F}(M(G')_{[n-k+1]}) \in \mathcal C^{n-k}_{j}$ with $G\in \mathcal{D}_i$, $G'\in \mathcal{D}_{j}$ and $G\neq O$, we have
	{\begin{align*}
			d_S(\mathcal U,\mathcal V)&=2{\rm rank}\left(
			\begin{array}{cccc}
				\boldsymbol I_{i(k-1)} & \boldsymbol O & \boldsymbol O & \boldsymbol O \\
				\boldsymbol O & \boxed{\boldsymbol I_{k-1}} & \boldsymbol C & \boldsymbol O \\
				\boldsymbol O & \boldsymbol O & \boldsymbol D & \boldsymbol O \\
				\boldsymbol O & \boldsymbol O & \boldsymbol O & \boldsymbol I_{k-1} \\
				\hdashline[2pt/2pt]
				\boldsymbol I_{i(k-1)} & \boldsymbol O & \boldsymbol O & \boldsymbol O \\
				\boldsymbol O & \boxed{\boldsymbol I_{k-1}} & \boldsymbol O & \boldsymbol O \\
				\boldsymbol O & \boldsymbol O & * & * \\
			\end{array}
			\right)-2(n-k+1)\\&=2{\rm rank}\left(
			\begin{array}{cccc}
				\boldsymbol I_{i(k-1)} & \boldsymbol O & \boldsymbol O & \boldsymbol O \\
				\boldsymbol O & \boldsymbol O  & \boldsymbol C & \boldsymbol O \\
				\boldsymbol O & \boldsymbol O & \boldsymbol D & \boldsymbol O \\
				\boldsymbol O & \boldsymbol O & \boldsymbol O & \boldsymbol I_{k-1} \\
				\hdashline[2pt/2pt]
				\boldsymbol I_{i(k-1)} & \boldsymbol O & \boldsymbol O & \boldsymbol O \\
				\boldsymbol O & \boldsymbol I_{k-1} & \boldsymbol O & \boldsymbol O \\
				\boldsymbol O & \boldsymbol O & * & * \\
			\end{array}
			\right)-2(n-k+1)
             \\&\geq 2{\rm rank}\left(
			\begin{array}{ccc;{2pt/2pt}c}
				\boldsymbol I_{i(k-1)} & \boldsymbol O & \boldsymbol O & \boldsymbol O \\
				\boldsymbol O & \boldsymbol I_{k-1} & \boldsymbol O & \boldsymbol O \\
				\boldsymbol O & \boldsymbol O  & \boldsymbol C & \boldsymbol O \\
				\boldsymbol O & \boldsymbol O & \boldsymbol D & \boldsymbol O \\\hdashline[2pt/2pt]
				\boldsymbol O & \boldsymbol O & \boldsymbol O & \boldsymbol I_{k-1} \\
			\end{array}
			\right)-2(n-k+1)
\\&=2(i(k-1)+k-1+n-(i+2)(k-1)+k-1)-2(n-k+1)\\&=2k-2.
	\end{align*}}
	\item$(3.5.2)$ For  any subspace $\mathcal U=\mathcal{F}(M(O)_{[n-k+1]})\in \mathcal C^{n-k+1}_i$, $\mathcal V=\mathcal{F}(M(G)_{[n-k+1]}) \in \mathcal C^{n-k+1}_{j}$ with $G\in \mathcal{D}_{j}$, we have
	{\begin{align*}
			d_S(\mathcal U,\mathcal V)&=2{\rm rank}\left(
			\begin{array}{cccc}
				\boldsymbol I_{(i+1)(k-1)} &   \boldsymbol O & \boldsymbol O \\
				\boldsymbol O &  \boldsymbol O& \boldsymbol I_{n-(i+2)(k-1)} \\
				\boldsymbol O & \boldsymbol *  & \boldsymbol O \\
				\boldsymbol I_{(i+1)(k-1)} & \boldsymbol O  & \boldsymbol O \\
				\boldsymbol O &\boldsymbol I_{k-1} & \boldsymbol * &  \\
				\boldsymbol O & \boldsymbol O &\boldsymbol *   \\
			\end{array}
			\right)-2(n-k+1)\\&\geq 2{\rm rank}\left(
			\begin{array}{ccccc}
				\boldsymbol I_{(i+1)(k-1)} & \boldsymbol O & \boldsymbol O \\
				\boldsymbol O & \boldsymbol I_{k-1} & \boldsymbol O  \\
			    \boldsymbol O & \boldsymbol O  & \boldsymbol I_{n-(i+2)(k-1)}
			\end{array}
			\right)-2(n-k+1)\\&=2(i(k-1)+2(k-1)+n-(i+3)(k-1)+k-1)-2(n-k+1)\\&=2k-2.
	\end{align*}}
\end{itemize}
\end{proof}

\begin{corollary}\label{rek2}
Let $n,k$ be positive integers with $n\geq 2k$ . If $k>[_1^r]_q+1$, then
\begin{align*}
 A_q^{*}(n,\{1,2,\ldots,k,n-k+1,\ldots,n-1\})=\frac{q^n-q^{k+r-1}}{q^{k-1}-1}+1,
\end{align*}
where $r \equiv n \pmod{(k-1)}$ and $0\leq r<k-1.$
\end{corollary}

\begin{proof}
    Apply Theorems \ref{flagupper2} and \ref{conc-2}.
\end{proof}

\section{Concluding remarks}
In this paper, we construct $(n,\mathcal A)_q$-ODFCs($(n,\{1,2,\ldots,k,n-k+1,\dots,n-1\})_q$-AODFCs) by MRD codes, where $\mathcal A\subseteq \{1,2,\ldots,k,n-k,\ldots,n-1\}$ and also show that for $k>\begin{bmatrix}
 r \\
 1 \\
\end{bmatrix}_q$($k>\begin{bmatrix}
 r' \\
 1 \\
\end{bmatrix}_q+1$), the ODFCs(AODFCs) from Theorem \ref{conc-1}(Theorem \ref{conc-2}) are optimal $(n,\mathcal A)_q$-ODFCs($(n,\{1,2,\ldots,k,n-k+1,\dots,n-1\})_q$-AODFCs),  where $n\equiv r\pmod k$, $n \equiv r' \pmod{(k-1)}$. However, when $k\leq \begin{bmatrix}
 r \\
 1 \\
\end{bmatrix}_q$($k\leq \begin{bmatrix}
 r' \\
 1 \\
\end{bmatrix}_q+1$), the gap between the upper and lower bounds of ODFCs(AODFCs) still relies on $q$.

Recently, Fourier Nebe \cite{FN2023}  highlight the advantages of using degenerate flags instead of flags in network coding. The information set is a metric affine space isometric to the space of upper triangular
matrices endowed with the flag-rank metric. Alfarano, Neri and Zullo provided a Singleton-like bound which relates the parameters of a flag-rank metric code. They also provided several constructions
of maximum flag-rank distance codes \cite{ANZ}.  More constructions for maximum flag-rank distance codes are desired. 
	
\bigskip
\noindent {\bf Data availability}

No data was used for the research described in this article.
	
\bigskip
\noindent {\bf Conflict of Interest}

The authors declare that they have no conflicts of interest.

\end{document}